\newtheorem{theorem}{Theorem}[section]
\newtheorem{lemma}[theorem]{Lemma}
\newtheorem{proposition}[theorem]{Proposition}
\newtheorem{remark}{Remark}
\theoremstyle{definition}
\newtheorem{definition}{Definition}[section]
\newtheorem{example}{Example}[section]
\newcommand{\n}{n}
\newcommand{\ee}{\varepsilon}
\newcommand{\nn}{\n \otimes \n}
\newcommand{\id}{\textup{I}}
\newcommand{\curl}{\mathrm{curl}\,}
\newcommand{\dive}{\mathrm{div}\,}
\newcommand{\tr}{\mathrm{tr}\,}
\newcommand{\R}{\mathbb{R}}
\title{A Novel Landau-de Gennes Model with Quartic Elastic Terms}
\date{\today}
\numberwithin{equation}{section}
\begin{document}
\renewcommand\Authfont{\small}
\renewcommand\Affilfont{\itshape\footnotesize}

\author[1]{Dmitry Golovaty\footnote{dmitry@uakron.edu}}
\author[3]{Michael Novack\footnote{mrnovack@indiana.edu}}
\author[3]{Peter Sternberg\footnote{sternber@indiana.edu}}
\affil[1]{Department of Mathematics, University of Akron, Akron, OH 44325}
\affil[2,3]{Department of Mathematics, Indiana University, Bloomington, IN 47405}

\maketitle

\noindent {\bf Abstract:} Within the framework of the generalized Landau-de Gennes theory, we identify a $Q$-tensor-based energy that reduces to the four-constant Oseen-Frank energy when it is considered over orientable uniaxial nematic states. Although the commonly considered version of the Landau-de Gennes theory has an elastic contribution that is at most cubic in components of the $Q$-tensor and their derivatives, the alternative offered here is quartic in these variables. One clear advantage of our approach over the cubic theory is that the associated minimization problem is well-posed for a significantly wider choice of elastic constants. In particular, this quartic energy can be used to model nematic-to-isotropic phase transitions for highly disparate elastic constants.

In addition to proving well-posedness of the proposed version of the Landau-de Gennes theory, we establish a rigorous connection between this theory and its Oseen-Frank counterpart via a $\Gamma$-convergence argument in the limit of vanishing nematic correlation length. We also prove strong convergence of the associated minimizers.

\section{Introduction}

Two well-regarded and heavily researched mathematical models for nematic liquid crystals are the director-based Oseen-Frank model \cite{Os33,Zo33,Fr58,virg} and the $Q$-tensor-based Landau-de Gennes theory \cite{de71,dePr95,virg,newt}. In this article we present a version of Landau-de Gennes which enables us to make a rigorous and relatively simple asymptotic connection between the two theories in the limit of Landau-de Gennes with vanishing non-dimensional nematic correlation length. In \cite{MaZa10}, the authors carry out such a program in the so-called `equal-constants' setting where the elastic energy of both models is given by the Dirichlet integral. Here we identify an elastic energy density involving certain terms that are quartic in the $Q$-tensor and its derivatives such that the corresponding Landau-de Gennes energy approaches, in the sense of $\Gamma$-convergence, the full Oseen-Frank energy.
 
 We recall now the form of the Oseen-Frank energy,
 \begin{align}\notag
F_{OF}(n):=&\int_\Omega\left(\frac{K_1}{2}(\dive n)^2 + \frac{K_2}{2}((\curl n) \cdot n)^2 +\frac{K_3}{2}|(\curl n )\times n|^2\right. \\ \label{FOF1}
&\left. \quad\quad +\frac{K_2+K_4}{2}(\textup{tr}\, (\nabla n)^2 - (\dive n)^2)\right)\,dx,
\end{align}
where $K_1,K_2,K_3$ and $K_4$ are material constants,  $\Omega\subset\R^3$ represents the sample domain and the director $n$ maps 
$\Omega$ to $\mathbb{S}^2$.

We also recall the standard Landau-de Gennes model defined for $Q:\Omega\to \mathcal{S}$, where 
 \[\mathcal{S}:=\left\{Q\in M^{3\times3}:Q^T=Q,\,\tr{Q}=0\right\}.\] 
It is given by
\begin{equation}
\label{elastic}
F_{LdG}(Q):=\int_\Omega\left(\frac{L_1}{2}{|\nabla Q|}^2+\frac{L_2}{2}Q_{ij,j}Q_{ik,k}+\frac{L_3}{2}Q_{ik,j}Q_{ij,k} +W(Q)\right)\,dx,
\end{equation}
{where $Q_{ij,k}=\partial Q_{ij}/\partial x_k$ and repeated indices are summed from one to three.}
Here the bulk Landau-de Gennes energy density is
\begin{equation}
\label{LdG}
W(Q):=a\,\mathrm{tr}\left(Q^2\right)-\frac{2b}{3}\,\mathrm{tr}\left(Q^3\right)+\frac{c}{2}\left(\mathrm{tr}\left(Q^2\right)\right)^2,
\end{equation}
cf. \cite{gartland}. The coefficient $a$ is temperature-dependent and in particular is negative for sufficiently low temperatures. Throughout this article we will assume that we are in a temperature regime where $a<\frac{b^2}{27c}$, an inequality implying that $Q$-tensors in the minimal set $\mathcal N$ of $W$ are in a uniaxial nematic state describable in terms of a director as follows
\begin{equation}
\mathcal N:=\left\{s_0 \left( \n \otimes \n- \frac{1}{3}\id\right): \n \in \mathbb{S}^2 \right\}.\label{minQ}
\end{equation}
Here $s_0$ is given explicitly in terms of the coefficients $a$, $b$, and $c$, and by subtracting an appropriate constant from $W$ one can take $W$ to vanish along this minimal set. We will ignore this constant and simply assume without loss of generality that $W$ vanishes along this minimal set of states. 

The effort to find a connection between $F_{OF}$ and a corresponding $Q$-tensor-based elastic energy has a long history, going back at least to \cite{BeMe84}, and includes the contributions of \cite{Di95,LoTr,Na97}.
However, these studies are premised on the observation that one can obtain $F_{OF}$ from a $Q$-tensor model {for certain values of the elastic constants by adding to the standard Landau-de Gennes energy \eqref{elastic} an additional elastic term that is cubic in $Q$ and its derivatives, namely
\[
\int_{\Omega}Q_{lk}\partial_kQ_{ij}\partial_lQ_{ij}\,dx.
\]
As pointed out, for instance, in \cite{ball2017liquid}, pg. 21, this choice represents one of the six possible linearly independent cubic terms that are quadratic in $\nabla Q$ and respect the necessary symmetries.}
From the standpoint of energy minimization, unfortunately, such a version of Landau-de Gennes becomes problematic, since the inclusion of the cubic term leads to an energy which is unbounded from below, \cite{BaMa10}. Indeed, quoting \cite{LoTr}, ``In the presence of biaxial fluctuations the general third order theory in $Q_{\alpha\beta}$ becomes unstable and thus is thermodynamically incorrect. One has to include higher order terms (or neglect third-order ones) to preserve stability of the free energy." Alternatively, one can impose a constraint through a choice of bulk potential that penalizes large $Q$ and prevents the cubic elastic term(s) from overtaking the quadratic ones, {\cite{BaMa10,bp,fatkullinslastikov,kkls}.}

Not surprisingly, this deficiency then also leads to instabilities in attempts to capture dynamics through the corresponding gradient flow. Along these lines we mention the work of  \cite{IyXuZa15}
where the authors overcome this impediment to obtain a dynamical well-posedness result under an assumption of sufficiently small initial data, while also showing blow-up for large initial data.
 
 In contrast to these sizable troubles to be overcome when taking a cubic elastic energy density for Landau-de Gennes, we propose a version of Landau-de Gennes, in the spirit of the quotation from \cite{LoTr} above, involving a quartic elastic energy density that presents none of these technical difficulties. As usual, it is defined over the class $\mathcal{S}$. In a prototypical form, the energy is given by
\begin{align}\notag
\mathcal{F}_{LdG}(Q) := \int_\Omega &\left(\frac{L_1}{2} \left|\left(\frac{s_0}{3}\id + Q\right)\dive Q\right|^2+ \frac{L_2}{2} \left|\left(\frac{s_0}{3}\id + Q\right) \curl Q\right|^2\right.\\*
&\left. +\frac{L_3}{2}\left|\left(\frac{2s_0}{3}\id - Q\right) \dive Q \right|^2 +\frac{L_4}{2}\left|\left(\frac{2s_0}{3}\id-Q\right)\curl Q\right|^2\right.\nonumber\\*
&\qquad + W(Q)\bigg)\,dx.\label{ourfldg}
\end{align}
Here the elastic constants $\{L_i\}$ are taken to be positive and we assume $W$ is still given by \eqref{LdG}. We point out that our model is still {\em quadratic} in its dependence on the gradient of $Q$.

{ Before explaining how we arrive at $\mathcal{F}_{LdG}$ we want to be clear on our motivation for seeking such a version of Landau-de Gennes energy. Our criteria were:\\
$\bullet$ For a reasonable range of elastic constants, one should be able to recover the four-term Oseen-Frank energy among uniaxial $Q$-tensors as in \eqref{minQ}. This range should, in particular, allow for the regime of extreme disparity between the $K_i$'s, since we wish to use the model to explore various types of liquid crystals for which some subset of the three deformations splay, twist and bend is far more favorable energetically than others. For example, we seek a model capable of capturing the formation of tactoids in nematic/isotropic phase transitions. (See, e.g. \cite{GSV, GoNoStVe18, GoKiLaNoSt19}.)\\
$\bullet$ Minimization of $\mathcal{F}_{LdG}$ via the direct method should be achievable, and in particular, the energy should be bounded from below and coercive.\\
$\bullet$ Any elastic terms should respect the necessary symmetries and so be selected from the list to be found, for example, in \cite{LoMoTr87}. Among the quartic choices--that is, quadratic in both $Q$ and $\nabla Q$, there are 13 to work with.\\
$\bullet$ One should strive for as simple a choice as possible that meets the previous three criteria.
}

To arrive at $\mathcal{F}_{LdG}$, we begin by considering $Q\in \mathcal N$ where the corresponding director field $n$ is sufficiently smooth. We show that each elastic term in the Oseen-Frank energy can be realized through projections of $\dive Q$ and $\curl Q$ on $n$ and on the plane perpendicular to $n$. This allows us to rewrite $F_{OF}$ in terms of the divergence and curl of $Q\in\mathcal N$. We subsequently relax the constraint $Q(x)\in\mathcal N$ to allow biaxial states as well by assuming simply that $Q$ takes values in $\mathcal S$ and add $W(Q)$ to the elastic energy density to force energy minimizing configurations to have values close to $\mathcal N$. { While we certainly do not claim that our choice is unique, this version of Landau-de Gennes model leads to a variational problem that is well-posed under minimization and rigorously reduces to $F_{OF}$ in the limit of vanishing non-dimensional nematic correlation length. Indeed, we feel the resulting energy $\mathcal{F}_{LdG}$ given in \eqref{ourfldg} meets all the criteria listed above.}


Though in this article we do not address dynamics, in \cite{GoKiLaNoSt19} we carry out computations in the context of an associated gradient flow in a thin film limit. The computations are performed for a director that lies in the plane of the film and in the regime where splay is heavily penalized, i.e. where $L_1$ in $\mathcal{F}_{LdG}$ is much larger than the other elastic coefficients. In doing so, we also pursue a temperature regime for the Landau-de Gennes potential $W$ where both the nematic and isotropic states are preferable. We find nice agreement with certain experimentally observed morphologies associated with tactoid evolution and defect splitting. Indeed, this highlights another favorable feature of the energy $\mathcal{F}_{LdG}$, namely that it allows one to model nematic/isotropic phase transitions in such a way that in the uniaxial nematic region, the energy agrees with Oseen-Frank.

The plan of the paper is as follows. In Proposition \ref{reduction}, we show that for $Q$ in the uniaxial minimizing set given by \eqref{minQ}, $\mathcal{F}_{LdG}(Q)$ reduces to precisely $F_{OF}(n)$ for a particular set of constants $\{L_i\}$ given in terms of $\{K_i\}$. More precisely, we can assert this equivalence when $Q$ is orientable, in the sense that  $Q\in H^1(\Omega,\mathcal N)$ is representable as $Q=s_0(\nn - \id/3)$ for some `lifting' $n\in H^1(\Omega;\mathbb{S}^2)$, a property that in particular always holds when $\Omega$ is taken to be simply-connected, cf. \cite{BaZa11}.  After a non-dimensionalization, leading to a dimensionless version of $\mathcal{F}_{LdG}$, namely $\mathcal{F}_{\varepsilon}$ given in \eqref{Feps} below,
we argue that 
this form of Landau-de Gennes is coercive over $H^1(\Omega;\mathcal{S})$, and weakly lower-semi-continuous, making it well-suited for minimization via the direct method in the calculus of variations when a Dirichlet (strong anchoring) condition is imposed on $\partial\Omega$, cf. Theorem \ref{existence}. As described at the outset of Section 2.2, the parameter $\varepsilon$ appearing in the non-dimensionalization represents a ratio of the nematic correlation length to a characteristic lengthscale of the domain. 

To make rigorous the asymptotic connection between $\mathcal{F}_{\varepsilon}$ and $F_{OF}$ in the limit of small nematic correlation length, we then establish $\Gamma$-convergence and compactness with respect to weak $H^1$-convergence, cf.  Theorem \ref{tt1} and Proposition \ref{compactness}. From standard $\Gamma$-convergence theory this implies the weak $H^1$-convergence of minimizers of Landau-de Gennes to a minimizer of Oseen-Frank. In our last result, Theorem \ref{strongcon}, we upgrade this convergence to strong $H^1$-convergence.

\vskip.1in
\noindent{\bf Acknowledgments.} {\it  DG acknowledges the support from NSF DMS-1729538. MN and PS acknowledge the support from a Simons Collaboration grant 585520.}

\section{Tensor Formulation of the Oseen-Frank\\ Energy}
In this section, we present several calculations which establish equalities between the terms in the Oseen-Frank energy of an $\mathbb{S}^2$-valued vector field $n$ and quartic terms in $\nn$ and $\nabla (\nn)$. These calculations form the basis for our choice of elastic terms for a Landau-de Gennes energy. {Let us first establish the notation used throughout this section and the rest of the manuscript.
\begin{definition}
For a function $f:\mathbb{R}^n \to \mathbb{R}^m$, we define $\nabla f$ to be matrix of partial derivatives $\nabla f = (\partial f_i /\partial x_j) \in \mathbb{R}^{m\times n}$.\end{definition}
When $f$ is scalar-valued, we will for convenience sometimes treat $\nabla f$ as a column vector as opposed to a row vector, to aid in calculations.\\

We will often use the notation $A_j$ for the $j$-th row of a matrix $A$.
\begin{definition}
For a smooth, matrix-valued map $A:\mathbb{R}^n \to \mathbb{R}^{n \times n}$, the vector field $\dive A: \mathbb{R}^n \to \mathbb{R}^n$ is given by
$$
\dive A = \sum_{j=1}^n (\dive A_j) e_j
,$$
so that the $j$-th entry of $\dive A$ is the divergence of the $j$-th row of $A$.
\end{definition}
\begin{definition}
We define the curl of a tensor field $A$ by 
\begin{equation}\label{curldef}
(\curl A) v := \curl (A^Tv) \quad \textup{   for all }v \in \mathbb{R}^3,
\end{equation}
which is equivalent to defining $\curl A$ via
\begin{equation}\notag
\curl A=\ee_{ijk}A_{mj,i}e_k \otimes e_m.
\end{equation}
Hence the $j$-th column of $\curl A$ is the curl $A_j$.
\end{definition}
In order to calculate terms involving the curl of a symmetric tensor, we need the following lemma, the proof of which is immediate from the previous definition.
\begin{lemma}\label{tensorcurl}
\begin{enumerate}
\item For any smooth vector field $m$, we have
$$\curl (m \otimes m) = ((\curl  m_jm)_i) \in \mathbb{R}^{3\times 3} .$$
That is, the $j$-th column of $\curl (m \otimes m)$ is $\curl (m_j m)$.
\item For any tensor field $Q$ taking values in the space of symmetric matrices, if we refer to the $j$-th row of $Q$ as $Q_j$, we have
\begin{equation}\label{qcurl}
\curl Q = ( (\curl Q_j)_i ) \in \mathbb{R}^{3 \times 3}.
\end{equation}
\end{enumerate}
\end{lemma}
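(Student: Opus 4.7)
The plan is to verify both statements by unpacking the component formula $\curl A = \varepsilon_{ijk} A_{mj,i}\, e_k \otimes e_m$ supplied by the definition of the tensor curl; the argument is essentially pure index bookkeeping, which is why the authors declare the proof immediate. I would dispatch part 2 first in its general form and then obtain part 1 as the specialization $Q = m \otimes m$.

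For part 2, I would read off the $(k,m)$-entry as $(\curl Q)_{km} = \varepsilon_{ijk} Q_{mj,i}$. Fixing the index $m$ and viewing $j \mapsto Q_{mj} = (Q_m)_j$ as the $j$-th component of the $m$-th row $Q_m$ of $Q$, the cyclic identity $\varepsilon_{ijk} = \varepsilon_{kij}$ converts $\varepsilon_{ijk}\,\partial_i (Q_m)_j$ into the standard scalar curl $(\curl Q_m)_k$. Hence $(\curl Q)_{km} = (\curl Q_m)_k$, so the $j$-th column of $\curl Q$ is exactly $\curl Q_j$, matching the matrix $((\curl Q_j)_i)$ in \eqref{qcurl}. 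Symmetry of $Q$ is not strictly needed for this identification but is consistent with freely interpreting $Q_j$ as either the $j$-th row or the $j$-th column.

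For part 1, I would set $A := m \otimes m$, so that $A_{mj} = m_m m_j$ and the $j$-th row of $m \otimes m$ is the vector field $m_j m$. Either quoting part 2 or rerunning the same index calculation with this choice of $A$ gives that the $j$-th column of $\curl(m \otimes m)$ equals $\curl(m_j m)$, which is precisely the asserted form.

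There is no real mathematical obstacle here; the only potential source of error is confusing rows with columns or mislabeling which slot of $\varepsilon_{ijk}$ is contracted against which partial derivative. I would therefore keep explicit track of both conventions throughout the short index manipulation.
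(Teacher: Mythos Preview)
Your proposal is correct and matches the paper's approach exactly: the paper simply declares the lemma ``immediate from the previous definition'' without giving any further argument, and what you have written is precisely the short index computation that makes this immediate. Your observation that symmetry of $Q$ is not actually needed for part~2 is also correct, since the component identity $(\curl A)_{km} = \varepsilon_{ijk} A_{mj,i} = (\curl A_m)_k$ holds for any tensor field $A$ with $A_m$ denoting its $m$-th row.
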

We begin our analysis with expressions for the $K_1$, $K_2$ and $K_3$ elastic terms from Oseen-Frank in terms of derivatives of $\nn$.}
\begin{proposition}\label{K1}
Let $\n$ be a smooth vector field defined on an open subset of $\mathbb{R}^3$ and taking values in $\mathbb{S}^2$. Then
\begin{equation}\label{k1}
(\dive \n)^2 = | (\nn) \dive(\nn)|^2.
\end{equation}
\end{proposition}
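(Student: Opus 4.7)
The plan is to directly compute both sides and use the unit-length constraint $|n|^2 = 1$. First I would expand $\dive(n\otimes n)$ using the row-wise definition of divergence: the $i$-th component is $\partial_j(n_i n_j) = n_{i,j} n_j + n_i n_{j,j}$, so
\[
\dive(n\otimes n) = (n\cdot\nabla)n + (\dive n)\, n.
\]
Next, since $(n\otimes n)v = (n\cdot v)\, n$ for any vector $v$, applying $n\otimes n$ to the above gives
\[
(n\otimes n)\dive(n\otimes n) = \bigl[n\cdot (n\cdot\nabla)n + (\dive n)|n|^2\bigr]\, n.
\]

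The key observation is that the constraint $|n|^2=1$ forces $n \cdot (n\cdot\nabla)n = n_i n_j n_{i,j} = \tfrac{1}{2} n_j \partial_j(|n|^2) = 0$, so the bracketed scalar collapses to $\dive n$. Thus $(n\otimes n)\dive(n\otimes n) = (\dive n)\, n$, and taking squared norms while again using $|n|^2=1$ yields \eqref{k1}.

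The computation is essentially routine; the only thing to be a little careful about is keeping straight which index is summed in the row-wise divergence convention of the preceding definition, and remembering to use both consequences of $|n|=1$ (that $|n|^2$ can be replaced by $1$ and that $n$ is orthogonal to any directional derivative of itself). No deeper obstacle is anticipated.
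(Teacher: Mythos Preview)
Your proposal is correct and is essentially the same approach as the paper's: both arguments establish the pointwise identity $(n\otimes n)\dive(n\otimes n) = (\dive n)\,n$ by expanding $\dive(n\otimes n)$ and using the consequence $n\cdot(\nabla n\,n)=0$ of the unit-length constraint, then take squared norms. The paper merely runs the chain of equalities in the opposite direction (starting from $(\dive n)\,n$ and building up), but the content is identical.
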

\begin{proof}
Let us first note that
\begin{equation}\label{modn}
\color{black}0=\frac{1}{2} \nabla (|\n|^2) = \nabla \n^T \n,
\end{equation}
which follows from the fact that $|\n|=1$ everywhere. We now use \eqref{modn} to write{\color{black}
\begin{align}\notag
(\dive \n) \n &= (\dive \n)\n+\nabla \n^T \n   \\ \notag
&=  (\dive \n)\n+( \n\cdot\nabla \n^T \n ) \n  \\ \notag
&=  (\dive \n)\n+(\nabla \n\, \n\cdot \n ) \n  \\ \notag
&= (\nn)(\dive \n )\n+(\nn)(\nabla \n \, \n) \\ \label{divn}
&= (\nn)\dive (\nn).
\end{align}}
Taking $| \cdot |^2$ on both sides yields \eqref{k1}.
\end{proof}

\begin{proposition}\label{K2}
Let $\n$ be a smooth vector field defined on an open subset  of $\mathbb{R}^3$ and taking values in $\mathbb{S}^2$. Then
\begin{equation}\label{k2}
((\curl \n) \cdot \n)^2 = | (\n \otimes \n) \curl (\n \otimes \n)|^2,
\end{equation}
where for any matrix $M$, $| M |^2$ denotes the sum of the squares of the entries.
\end{proposition}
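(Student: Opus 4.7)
The plan is to compute $(\n \otimes \n)\curl(\n \otimes \n)$ column by column using part 1 of \cref{tensorcurl}, and to observe that most terms drop out due to the constraint $|\n|=1$, leaving exactly $(\n \cdot \curl \n)^2$ (times a unit-length vector) in each column.

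First I would record that for any vector $v \in \R^3$, the matrix $\n \otimes \n$ acts as $(\n \otimes \n) v = (\n \cdot v)\,\n$. Combined with \cref{tensorcurl}(1), which tells us that the $j$-th column of $\curl(\n \otimes \n)$ is $\curl(n_j \n)$, this gives that the $j$-th column of $(\n \otimes \n)\curl(\n \otimes \n)$ is
\begin{equation}\notag
\bigl(\n \cdot \curl(n_j \n)\bigr)\,\n.
\end{equation}

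Next I would expand the curl via the product rule $\curl(f v) = \nabla f \times v + f \curl v$ with $f = n_j$ and $v = \n$, giving
\begin{equation}\notag
\curl(n_j \n) = \nabla n_j \times \n + n_j \curl \n.
\end{equation}
Dotting with $\n$, the first term vanishes since $\n \cdot (\nabla n_j \times \n) = 0$ by the scalar triple product identity. Hence
\begin{equation}\notag
\n \cdot \curl(n_j \n) = n_j\, (\n \cdot \curl \n),
\end{equation}
so the $(i,j)$-entry of $(\n \otimes \n)\curl(\n \otimes \n)$ equals $n_i n_j (\n \cdot \curl \n)$.

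Finally, I would compute the Frobenius norm:
\begin{equation}\notag
|(\n \otimes \n)\curl(\n \otimes \n)|^2 = \sum_{i,j} n_i^2 n_j^2 (\n \cdot \curl \n)^2 = |\n|^4 (\n \cdot \curl \n)^2 = ((\curl \n) \cdot \n)^2,
\end{equation}
where the last equality uses $|\n|=1$. There is no real obstacle here; the only point requiring care is the vanishing of $\n \cdot (\nabla n_j \times \n)$, which is what makes the formula come out cleanly despite the pointwise unit-length constraint not being used explicitly until the very last step.
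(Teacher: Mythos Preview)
Your proof is correct and uses essentially the same ingredients as the paper's: the identification of the $j$-th column of $\curl(\n\otimes\n)$ as $\curl(n_j\n)$ via \cref{tensorcurl}, the product rule for the curl, the vanishing of $\n\cdot(\nabla n_j\times\n)$ from the scalar triple product, and $|\n|=1$. The only difference is organizational: the paper starts from $((\curl\n)\cdot\n)^2$ and builds up to $|(\n\otimes\n)\curl(\n\otimes\n)|^2$ through a chain of norm identities, whereas you compute the entries of $(\n\otimes\n)\curl(\n\otimes\n)$ explicitly as $n_in_j(\n\cdot\curl\n)$ and then take the Frobenius norm---arguably a slightly more direct route.
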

\begin{proof}
Let us first record
\begin{equation}\label{prel}
(\nn) (\nabla n_j \times \n)=(\n \cdot (\nabla n_j \times \n))\n = (\nabla n_j \cdot (\n \times \n))\n = 0.
\end{equation}
In the following calculation, we will use the fact that $|\n|^2=1$ in the first and third lines and use \eqref{prel} once to add 0 in the fourth line. We write
\begin{align}
\notag   ((\curl \n) \cdot \n)^2
&= |((\curl \n) \cdot \n) \n |^2 \notag\\
&= |(\nn) (\curl \n)|^2 \notag\\
&= \sum_j |n_j(\nn) (\curl \n)|^2 \notag\\
&= \sum_j |n_j(\nn) (\curl \n) + (\nn )(\nabla n_j \times \n)|^2 \notag\\
&= \sum_j |(\nn) (n_j\curl \n +  \nabla n_j \times \n)|^2 \notag\\
&=\sum_j |(\nn) \curl (n_j \n)|^2.\label{long1}
\end{align}
But $ \curl (n_j \n)$ is precisely the $j$-th column of $\curl (\n \otimes \n)$, so that 
$$
\sum_j |(\nn) \curl (n_j \n)|^2 = |(\nn)\curl (\nn)|^2.
$$
Combining this with \eqref{long1} finishes the proof of \eqref{k2}.
\end{proof}

\begin{proposition}\label{K3}
Let $\n$ be a smooth vector field defined on an open subset  of $\mathbb{R}^3$ and taking values in $\mathbb{S}^2$. Then
\begin{equation}\label{k3}
|(\curl \n) \times \n|^2 = |(\id - \nn) \dive (\nn)|^2
\end{equation}
\end{proposition}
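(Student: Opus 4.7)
The plan is to show that both sides of \eqref{k3} collapse to the common expression $|(\nabla \n)\n|^2$, where $(\nabla \n)\n$ is the vector with components $n_j\partial_j n_i$. Once this reduction is made, the identity is immediate.

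For the right-hand side, I would first compute $\dive(\nn)$ componentwise: the $i$-th entry is $\partial_j(n_i n_j) = n_i\dive \n + n_j \partial_j n_i$, giving the identity
\[
\dive(\nn) = (\dive\n)\,\n + (\nabla \n)\,\n.
\]
Combined with formula \eqref{divn} from the proof of Proposition \ref{K1}, which asserts that $(\nn)\dive(\nn) = (\dive\n)\n$, subtracting yields
\[
(\id - \nn)\dive(\nn) = (\nabla \n)\,\n.
\]

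For the left-hand side, I would invoke the classical vector-calculus identity
\[
\tfrac{1}{2}\nabla(|\n|^2) = (\n\cdot\nabla)\n + \n\times\curl \n,
\]
and use $|\n|=1$ to deduce $(\curl\n)\times\n = (\n\cdot\nabla)\n$. Since $(\n\cdot\nabla)\n$ has components $n_j\partial_j n_i = ((\nabla \n)\n)_i$, this gives $(\curl\n)\times \n = (\nabla \n)\n$. Taking $|\cdot|^2$ of both equalities and comparing yields \eqref{k3}.

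I do not anticipate any genuine obstacle: the argument hinges only on the product-rule expansion of $\dive(\nn)$, the previously established formula \eqref{divn}, and the well-known consequence of $|\n|=1$ that the tangential derivative $(\n\cdot\nabla)\n$ equals $(\curl\n)\times\n$. The only point deserving care is bookkeeping between the row/column conventions for $\nabla \n$ and the componentwise action $(\nabla \n)\n$, but the unit-length constraint makes the normal component drop out cleanly.
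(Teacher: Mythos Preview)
Your proof is correct and follows essentially the same route as the paper: both arguments reduce each side of \eqref{k3} to $(\nabla \n)\n$, using \eqref{divn} for the right-hand side. The only cosmetic difference is that where you invoke the classical identity $\tfrac12\nabla(|\n|^2)=(\n\cdot\nabla)\n+\n\times\curl\n$ to obtain $(\curl\n)\times\n=(\nabla\n)\n$, the paper instead subtracts $\nabla\n^T\n=0$ and writes out the antisymmetric matrix $(\nabla\n-\nabla\n^T)$ explicitly to reach the same conclusion.
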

\begin{proof}
Using the calculation from \eqref{divn} of $(\nn) \dive (\nn)$, let us first write{\color{black}
\begin{align*}
(\id - \nn)\dive (\nn) &= \dive(\nn) - (\dive \n)\n \\
&= \nabla \n\,\n+(\dive \n)\n - (\dive \n)\n\\
&= \nabla \n\,\n.
\end{align*}
Now recalling \eqref{modn}, we may subtract $\nabla \n^T \n=0$ from the right hand side of previous equation to obtain
\begin{align}\notag
(\id - \nn)\dive (\nn) &= \nabla \n\,\n - \nabla \n ^T \n \\ \notag
&= \begin{pmatrix} 0 & n_{1,y}-n_{2,x} & n_{1,z}-n_{3,x} \\ n_{2,x}-n_{1,y} & 0 & n_{2,z}-n_{3,y} \\ n_{3,x}-n_{1,z} & n_{3,y}-n_{2,z} & 0 \end{pmatrix}\n \\ \notag
&=  \begin{pmatrix} 0 & -(\curl \n)_3 & (\curl \n)_2 \\ (\curl \n)_3 & 0 & -(\curl \n)_1 \\ -(\curl \n)_2 & (\curl \n)_1 & 0 \end{pmatrix} \n \\ \notag
&= \begin{pmatrix}   (\curl \n)_2n_3-(\curl \n)_3n_2  \\ (\curl \n)_3n_1  -(\curl \n)_1n_3 \\ (\curl \n)_1n_2  -(\curl \n)_2n_1 \end{pmatrix} \\ \notag
&= (\curl \n) \times \n.
\end{align}}
Taking $|\cdot|^2$ on both sides completes the proof.
\end{proof}

\begin{proposition}\label{equalK}
Let $\n$ be a smooth vector field defined on an open subset  of $\mathbb{R}^3$ and taking values in $\mathbb{S}^2$. Then
\begin{equation}\label{equalk}
 |\nabla \n|^2=| (\id - \nn) \curl (\nn)|^2.
\end{equation}
\begin{proof}
Let us first calculate $|\curl(\nn)|^2$, after which we can use Proposition \ref{K2} to find $| (\id - \nn) \curl (\nn)|^2 $. Invoking Lemma \ref{tensorcurl} and then expanding, we write
\begin{align}\notag
|\curl(\nn)|^2 &= \sum_j | \curl (n_j\n)|^2 \\ \notag
&= \sum_j | \nabla n_j \times \n + n_j (\curl \n)|^2 \\ \notag
&= \sum_j |\nabla n_j \times \n|^2 + \sum_j n_j^2|\curl \n|^2 + \sum_j 2 (\nabla n_j \times \n) \cdot ( n_j \curl \n)\\ \label{123}
&=: I+II+III.
\end{align}
For $I$, we use Lagrange's identity and the identity $(\curl \n)\times \n=\nabla \n \, \n$ from the previous lemma to write
\begin{align}\notag
I&= \sum_j (|\nabla \n_j|^2 |\n|^2 - (\nabla \n_j \cdot \n)^2)\\ \notag
&= \left(\sum_j |\nabla \n_j|^2\right)-|\nabla \n \,\n|^2 \\ \label{I}
&= |\nabla \n|^2 - |(\curl \n) \times \n|^2.
\end{align}
Moving on to $II$, we immediately see that
\begin{equation}\label{II}
II = |\curl \n|^2.
\end{equation}
Finally, $III$ vanishes since
\begin{equation}\label{III}
III= \sum_j  (\nabla (|n_j|^2) \times \n) \cdot (  \curl \n) = ( \nabla (|\n|^2) \times \n) \cdot (\curl \n)=0.
\end{equation}
Substituting \eqref{I}-\eqref{III} into \eqref{123} yields
\begin{equation}\label{curlnn}
|\curl (\nn)|^2 = |\nabla \n|^2 -  |(\curl \n) \times \n|^2 + |\curl \n|^2 = |\nabla \n|^2 + ((\curl \n) \cdot \n)^2.
\end{equation}
But with the aid of Lemma \ref{K2}, we can also calculate $|\curl (\nn)|^2$ as
\begin{align}\notag
|(\id - \nn)\curl (\nn)&|^2 + |(\nn)\curl(\nn)|^2 \\ \label{alt} &= |(\id - \nn)\curl (\nn)|^2 + ((\curl \n)\cdot \n)^2.
\end{align}
Equating \eqref{curlnn} and \eqref{alt} equal and subtracting $((\curl \n)\cdot \n)^2$, we arrive at \eqref{equalk}.
\end{proof}
\end{proposition}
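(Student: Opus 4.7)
The plan is to compute the quantity $|\curl(n\otimes n)|^2$ in two different ways and then match the two expressions, using Proposition \ref{K2} to handle the tangential part and isolating the normal part via the orthogonal projection $I - n \otimes n$.

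First, I would compute $|\curl(n\otimes n)|^2$ directly. By Lemma \ref{tensorcurl}, the $j$-th column of $\curl(n\otimes n)$ is $\curl(n_j n) = \nabla n_j \times n + n_j \curl n$. Expanding $|\curl(n_j n)|^2$ and summing over $j$ produces three groups of terms: a sum of $|\nabla n_j \times n|^2$, a sum of $n_j^2 |\curl n|^2$ (which collapses to $|\curl n|^2$ since $|n|=1$), and a cross term proportional to $\sum_j \nabla(n_j^2) \times n \cdot \curl n = \nabla(|n|^2) \times n \cdot \curl n = 0$. For the first group, Lagrange's identity together with $|n|=1$ yields $|\nabla n_j \times n|^2 = |\nabla n_j|^2 - (\nabla n_j \cdot n)^2$. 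Summing over $j$, the first piece gives $|\nabla n|^2$, while the second, after recognizing $\nabla n_j \cdot n$ as the $j$-th component of $\nabla n\, n$, produces $|\nabla n\, n|^2 = |(\curl n)\times n|^2$ from the identity worked out in Proposition \ref{K3}. Combining everything and using the pointwise identity $|\curl n|^2 = ((\curl n) \cdot n)^2 + |(\curl n)\times n|^2$ then yields
\[
|\curl(n\otimes n)|^2 = |\nabla n|^2 + ((\curl n)\cdot n)^2.
\]

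Second, I would decompose $|\curl(n\otimes n)|^2$ orthogonally. Because $n \otimes n$ is the rank-one orthogonal projection onto $\mathrm{span}(n)$ and $I - n\otimes n$ is its complement, applying the decomposition column-by-column to any $3\times 3$ matrix $M$ gives $|M|^2 = |(n\otimes n)M|^2 + |(I - n\otimes n)M|^2$. Specializing to $M = \curl(n\otimes n)$ and invoking Proposition \ref{K2} for the tangential piece produces
\[
|\curl(n\otimes n)|^2 = ((\curl n)\cdot n)^2 + |(I - n\otimes n)\curl(n\otimes n)|^2.
\]
Equating this with the expression from the previous paragraph and canceling $((\curl n)\cdot n)^2$ delivers the claimed identity.

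The main obstacle I anticipate is the bookkeeping in the first step, specifically recognizing that $\sum_j (\nabla n_j \cdot n)^2$ is exactly $|\nabla n \, n|^2$ and that this equals $|(\curl n) \times n|^2$ for a unit vector field, which requires the identity from the proof of Proposition \ref{K3}. The cross-term cancellation and the orthogonal projection step are routine once one notices that $I - n\otimes n$ acts as a projection onto $n^\perp$ applied column-wise.
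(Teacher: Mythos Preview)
Your argument is essentially identical to the paper's: you compute $|\curl(n\otimes n)|^2$ column-by-column via Lemma~\ref{tensorcurl}, handle the three resulting terms with Lagrange's identity, the relation $\nabla n\,n=(\curl n)\times n$ from Proposition~\ref{K3}, and the vanishing of $\nabla(|n|^2)$, then match against the orthogonal decomposition using Proposition~\ref{K2}. There is no substantive difference in strategy or in the lemmas invoked.
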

For uniaxial $Q$ such that $W(Q)=0$, we can use the preceding propositions to establish an equality between $\mathcal{F}_{LdG}$ and the Oseen-Frank energy $F_{OF}$.
\begin{proposition}\label{reduction}
Let $\Omega$ be an open set in $\mathbb{R}^n$ and suppose $W(Q)=0$ and $Q\in H^1(\Omega;\mathcal{N})$ is orientable, so that $Q=s_0(\nn-\id /3)$ for $n \in H^1(\Omega;\mathbb{S}^2)$. Then we have the equivalence
\begin{align}\notag
\mathcal{F}_{LdG}(Q)=\int_\Omega &\left(\frac{L_1}{2} \left|\left(\frac{s_0}{3}\id + Q\right)\dive Q\right|^2+ \frac{L_2}{2} \left|\left(\frac{s_0}{3}\id + Q\right) \curl Q\right|^2\right.\\\notag
&\left. +\frac{L_3}{2}\left|\left(\frac{2s_0}{3}\id - Q\right) \dive Q \right|^2 +\frac{L_4}{2}\left|\left(\frac{2s_0}{3}\id-Q\right)\curl Q\right|^2\right)\,dx \\* \notag
&\quad=\int_\Omega\left(\frac{K_1}{2}(\dive n)^2 + \frac{K_2}{2}((\curl n) \cdot n)^2 +\frac{K_3}{2}|(\curl n )\times n|^2\right. \\* 
&\left. \quad\quad\quad +\frac{K_2+K_4}{2}(\textup{tr}\, (\nabla n)^2 - (\dive n)^2)\right)\,dx =F_{OF}(n),\label{reductioneq}
\end{align}
where $s_0^4L_4=K_2+K_4$ and $s_0^4(L_i+L_4)=K_i$ for $1 \leq i \leq 3$.
\end{proposition}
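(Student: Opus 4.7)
The plan is to exploit the four propositions just established by first observing that the matrix factors appearing in $\mathcal{F}_{LdG}$ collapse dramatically when $Q$ lies in $\mathcal{N}$. A direct computation with $Q = s_0(\nn - \id/3)$ gives
\[
\tfrac{s_0}{3}\id + Q = s_0\,\nn, \qquad \tfrac{2s_0}{3}\id - Q = s_0(\id - \nn),
\]
and since the constant part of $Q$ is annihilated by both divergence and curl, $\dive Q = s_0\,\dive(\nn)$ and $\curl Q = s_0\,\curl(\nn)$. Substituting these into $\mathcal{F}_{LdG}(Q)$ pulls an overall $s_0^4$ out of each integrand, reducing the four elastic densities to $s_0^4$ times the squared quantities on the right-hand sides of Propositions \ref{K1}, \ref{K2}, \ref{K3}, and \ref{equalK}, respectively.

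Applying those four propositions then identifies the elastic part of $\mathcal{F}_{LdG}(Q)$ with
\[
\frac{s_0^4}{2}\int_\Omega\Bigl(L_1(\dive \n)^2 + L_2((\curl \n)\cdot \n)^2 + L_3|(\curl \n)\times \n|^2 + L_4|\nabla \n|^2\Bigr)\,dx.
\]
To match this with $F_{OF}$, I would invoke the well-known pointwise identity
\[
|\nabla \n|^2 = (\dive \n)^2 + ((\curl \n)\cdot \n)^2 + |(\curl \n)\times \n|^2 + \bigl(\tr(\nabla \n)^2 - (\dive \n)^2\bigr),
\]
valid for any $\mathbb{S}^2$-valued $\n$ and quickly verified by splitting $\nabla \n$ into symmetric and antisymmetric parts (giving $|\nabla \n|^2 - \tr(\nabla \n)^2 = |\curl \n|^2$) and then using $|\n|=1$ to decompose $|\curl \n|^2 = ((\curl \n)\cdot \n)^2 + |(\curl \n)\times \n|^2$. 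Feeding this identity into the $L_4$-term of the preceding display and collecting coefficients against the terms of $F_{OF}$ yields exactly the four relations $s_0^4(L_i + L_4) = K_i$ for $i = 1,2,3$ and $s_0^4 L_4 = K_2 + K_4$.

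The only mild technicality is that Propositions \ref{K1}--\ref{equalK} are stated for smooth $\n$, while the hypothesis only supplies $\n \in H^1(\Omega;\mathbb{S}^2)$. This is handled either by a density argument or by a direct rerun of each calculation using weak derivatives, since every step relies only on $|\n|=1$ a.e. and the product rule, both of which survive in the Sobolev setting. Beyond that, the proof is bookkeeping: I do not anticipate a genuine obstacle, as all of the analytic content has already been packaged into the preceding four propositions together with the classical Ericksen identity for unit vector fields.
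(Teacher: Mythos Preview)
Your proposal is correct and follows essentially the same route as the paper: identify $\tfrac{s_0}{3}\id+Q=s_0\,\nn$ and $\tfrac{2s_0}{3}\id-Q=s_0(\id-\nn)$, substitute, invoke Propositions \ref{K1}--\ref{equalK}, and then use the Ericksen identity for $|\nabla \n|^2$ to regroup into the Oseen--Frank form. You even add two small points the paper leaves implicit---the explicit remark that $\dive Q=s_0\,\dive(\nn)$, $\curl Q=s_0\,\curl(\nn)$, and the $H^1$-versus-smooth technicality---so there is nothing to correct.
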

\begin{proof}
Rearranging $Q=s_0(\nn-\id/3)$, we arrive at
\begin{equation}\label{rearrange}
    s_0\nn =  \frac{s_0}{3}\id + Q\textup{ and } s_0(\id - \nn) = \frac{2s_0}{3}\id - Q.
\end{equation}
Substituting the equations in \eqref{rearrange} into the left hand side of \eqref{reductioneq} and using the equalities \eqref{k1}, \eqref{k2}, and \eqref{k3}-\eqref{equalk} yields
\begin{align}\notag
&\int_\Omega \left(\frac{L_1}{2} \left|\left(\frac{s_0}{3}\id + Q\right)\dive Q\right|^2+ \frac{L_2}{2} \left|\left(\frac{s_0}{3}\id + Q \right) \curl Q \right|^2\right.\\\notag
&\left.\quad +\frac{L_3}{2}\left|\left(\frac{2s_0}{3}\id - Q\right) \dive Q \right|^2 +\frac{L_4}{2}\left|\left(\frac{2s_0}{3}\id-Q\right)\curl Q \right|^2\right)\,dx \\* \notag
=&\int_\Omega\left(\frac{s_0^4L_1}{2}|(\nn)\dive (\nn)|^2 + \frac{s_0^4L_2}{2}|(\nn)\curl (\nn)|^2\right. \\* \notag
&\quad\left.+\frac{s_0^4L_3}{2}|(\id-\nn)\dive (\nn )|^2 +\frac{s_0^4L_4}{2}|(\id - \nn) \curl(\nn)|^2\right)\,dx\\ \notag
=&\int_\Omega\left(\frac{s_0^4L_1}{2}(\dive n)^2 + \frac{s_0^4L_2}{2}((\curl n) \cdot n)^2+\frac{s_0^4L_3}{2}|(\curl n) \times n|^2\right. \\ \notag &\quad\quad\quad \left.+\frac{s_0^4L_4}{2}|\nabla n |^2\right)\,dx.
\end{align}
Recalling the identity 
\begin{equation}\notag
    |\nabla n|^2 = (\dive n)^2 + ((\curl n) \cdot n)^2 + |(\curl n) \times n|^2 + \tr(\nabla n)^2 - (\dive n)^2
\end{equation}
for smooth $n: \Omega \to \mathbb{S}^2$, we can rewrite the last integral as
\begin{align*}
    \int_\Omega&\left(\frac{s_0^4(L_1+L_4)}{2}(\dive n)^2 + \frac{s_0^4(L_2+L_4)}{2}((\curl n) \cdot n)^2\right.\\ &\left.\quad+\frac{s_0^4(L_3+L_4)}{2}|(\curl n) \times n|^2+\frac{s_0^4L_4}{2}\left(\tr(\nabla n)^2 - (\dive n)^2\right)\right)\,dx.
\end{align*}\end{proof}
We will refer to the elastic energy density terms in $\mathcal{F}_{LdG}$ as
\begin{align}\notag
\sigma&(Q):=\frac{L_1}{2} \left|\left(\frac{s_0}{3}\id + Q\right)\dive Q\right|^2+ \frac{L_2}{2} \left|\left(\frac{s_0}{3}\id + Q\right) \curl Q\right|^2 \\ \label{sigma}
&+\frac{L_3}{2}\left|\left(\frac{2s_0}{3}\id - Q\right) \dive Q \right|^2 +\frac{L_4}{2}\left|\left(\frac{2s_0}{3}\id-Q\right)\curl Q\right|^2 .
\end{align}

\begin{remark}
It is straightforward that $\sigma$ satisfies the the requisite frame indifference and material symmetry conditions; {cf. Lemma \ref{term35inv}}. Moreover, {in the Appendix} we identify each term in $\sigma$ as an appropriately weighted sum of terms from the generalized Landau-de Gennes theory \cite{LoMoTr87}. For example, the term $\left|\left(\frac{s_0}{3}\id + Q\right)\dive Q \right|^2$ corresponds to the  $L_2^{(2)}$-, $L_3^{(3)}$-, and $L_6^{(4)}$-invariants.
\end{remark}
\begin{remark}
{To ensure that $\sigma$ is non-negative, it is natural to require that each $L_i$ is non-negative. In fact, as we will see in the upcoming Proposition \ref{Lbdh1}, when each $L_i$ is positive, $\mathcal{F}_{LdG}$ is coercive over $H^1$. Conversely, if one of the $L_i$'s is negative, then one can construct a map $Q$ such that $\sigma(Q) < 0.$ For example, if $L_1<0,$ a curl-free map $Q$ can be constructed which satisfies 
$$
\frac{|L_1|}{2} \left|\left(\frac{s_0}{3}\id + Q\right)\dive Q\right|^2>\frac{L_3}{2}\left|\left(\frac{2s_0}{3}\id - Q\right) \dive Q \right|^2
$$
pointwise and ensures that $\sigma(Q)<0$.} 
\end{remark}
\begin{remark}{The Oseen-Frank elastic constants $K_i$ and our constants $L_i$ are related to each other as follows:
$$L_4=\displaystyle\frac{K_2+K_4}{s_0^4},\textup{ and }L_i=\displaystyle\frac{K_i-K_2-K_4}{s_0^4}\textup{ for }1 \leq i \leq 3,$$
or
$$
K_i = s_0^4(L_i+L_4)\textup{ for }1\leq i \leq 3,\textup{ and }K_4 =- s_0^4L_2.
$$
In the equal elastic constant case, $K_1=K_2=K_3=K$, $K_4=0$, so that $$F_{OF}(n)=\frac{K}{2}\int |\nabla n|^2/2,$$ each $L_i$ vanishes except for $L_4$. If we include the additional term $\int_\Omega L_5|\nabla Q|^2/2$, then the equal elastic constant case yields a $Q$-tensor model in which only $L_4$- and $L_5$-terms survive, satisfying $L_4+2L_5/s_0^2 = K$. As a special case, one may choose $L_4=0$, so that the elastic energy is just the Dirichlet energy for the $Q$-tensor.}
\end{remark}
\begin{remark}
One might inquire as to the relationship between the inequalities $L_i \geq 0$ and Ericksen's inequalities
\begin{equation}\label{ericksen}
    2K_1 \geq K_2+K_4,\, K_2 \geq |K_4|,\, K_3\geq 0
\end{equation}
for $F_{OF}$, which guarantee that the energy density in $F_{OF}$ is non-negative \cite{Er66}. It is quickly checked that if each $L_i$ is non-negative, then \eqref{ericksen} is satisfied. Conversely, if $\{K_i\}$ satisfy \eqref{ericksen}, it can be checked that the additional assumptions
$$
K_1  \geq K_2+K_4,\, K_3 \geq K_2+K_4,\, K_4 \leq 0
$$
are needed so that each $L_i$ is non-negative. It is possible that these additional assumptions can be relaxed through the inclusion of more quartic terms identified by \cite{LoMoTr87}, but we do not pursue this issue further.
\end{remark}

\section{Analysis and $\Gamma$-Convergence of $\mathcal{F}_{LdG}$}\label{ourmodel}
Motivated by Proposition \ref{reduction}, we combine the quartic $Q$-tensor elastic terms with the bulk potential $W$ given by \eqref{LdG} to obtain the following generalized Landau-de Gennes energy with quartic elastic energy density, defined over $Q\in H^1(\Omega;\mathcal{S})$:
\begin{align*}\notag
\mathcal{F}_{LdG}(Q) := \int_\Omega &\left(\frac{L_1}{2} \left|\left(\frac{s_0}{3}\id + Q\right)\dive Q\right|^2+ \frac{L_2}{2} \left|\left(\frac{s_0}{3}\id + Q \right) \curl Q\right|^2\right.\\*
&\left. +\frac{L_3}{2}\left|\left(\frac{2s_0}{3}\id - Q\right) \dive Q \right|^2 +\frac{L_4}{2}\left|\left(\frac{2s_0}{3}\id-Q\right)\curl Q\right|^2\right.\\*
&\qquad +  W(Q)\bigg)\,dx.
\end{align*}
We assume that $L_i >0$ for each $i.$ This will ensure that our energy is coercive over $H^1(\Omega;\mathcal{S})$; cf. Proposition \ref{Lbdh1}. Let us point out that $Q$ merely belonging to $H^1(\Omega;\mathcal{S})$ is not enough to conclude that $\mathcal{F}_{LdG}(Q) < \infty$. If, for example, $Q\in (L^\infty \cap H^1)(\Omega;\mathcal{S})$ or $W^{1,4}(\Omega;\mathcal{S})$, however, then $\mathcal{F}_{LdG}$ is necessarily finite.\par
In order to establish a meaningful asymptotic connection between this generalizaed Landau-de Gennes model and Oseen-Frank, we next non-dimensionalize the energy $\mathcal{F}_{LdG}$ by scaling the spatial coordinates
\begin{align*}
    \overline{x}=x/D,
\end{align*}
where $D:=\mathrm{diam}(\Omega)$. We also rescale $Q$ by letting 
\begin{align*}
    \overline{Q}(\overline{x}):= Q(x)/s_0.
\end{align*}
Although the order parameter $Q$ is already dimensionless, dividing by the dimensionless parameter $s_0$ enables us to eliminate it from the elastic energy density. Let $\zeta:= L_1s_0^4/D^2 $, and define a dimensionless elastic energy density via
\begin{align*}
    \overline{\sigma}\left(\overline{Q}\right) := \frac{1}{\zeta }\sigma(Q).
\end{align*}
Then, in the $\overline{x}$-coordinates,
\begin{align}\notag
    \overline{\sigma}\left(\overline{Q}\right) &=\frac{\overline{L}_1}{2} \left|\left(\id/3 + \overline{Q}\right)\dive \overline{Q}\right|^2+ \frac{\overline{L}_2}{2} \left|\left(\id/3 + \overline{Q}\right) \curl \overline{Q}\right|^2 \\* \label{rescalesigma}
&+\frac{\overline{L}_3}{2}\left|\left(2\id/3 - \overline{Q}\right) \dive \overline{Q} \right|^2 +\frac{\overline{L}_4}{2}\left|\left(2\id/3-\overline{Q}\right)\curl \overline{Q}\right|^2 ,
\end{align}
where $\overline{L}_i:=L_i/L_1$ for $1\leq i \leq 4$. Next, we rescale the Landau-de Gennes potential $W$ by introducing
\begin{align}\label{Wbar}
    \overline{W}\left(\overline{Q}\right) := \overline{a}\,\mathrm{tr}\left(\overline{Q}^2\right)-\frac{2\overline{b}}{3}\,\mathrm{tr}\left(\overline{Q}^3\right)+\overline{c}\left(\mathrm{tr}\left(\overline{Q}^2\right)\right)^2,
\end{align}
where $\overline{a}=as_0^2/c$, $\overline{b}=bs_0^3/c$, and $\overline{c}=s_0^4/2$. Then setting $\varepsilon:=\sqrt{\frac{\zeta}{c}}$, we have
\begin{align*}
    \overline{W}\left(\overline{Q}\right) = \frac{\varepsilon^2}{\zeta}W(Q).
\end{align*}
Recall that we are assuming the global minimum of $\overline{W}$ is 0, and it is now achieved along the set
\begin{equation}\label{newn}
    \left\{\nn - \frac{1}{3}\id: n \in \mathbb{S}^2 \right\},
\end{equation}
which we will still refer to as $\mathcal{N}$. If we write $\overline{\Omega}:=\Omega/D$, then the total energy is given by
\begin{align*}\notag
\int_{\Omega} &\left(\sigma(Q) +  W(Q)\right)\,dx= \zeta D^3\int_{\overline{\Omega}} \left(\overline{\sigma}\left(\overline{Q}\right) +  \frac{1}{\varepsilon^2}\overline{W}\left(\overline{Q}\right)\right)\,d\overline{x}.
\end{align*}
Finally, noting that $\zeta D^3$ has the dimensions of energy, we define the non-dimensionalized energy via
\begin{equation*}
\mathcal{F}_{\varepsilon}\left(\overline{Q}\right) := \int_{\overline{\Omega}} \left(\overline{\sigma}\left(\overline{Q}\right) +  \frac{1}{\varepsilon^2}\overline{W}\left(\overline{Q}\right)\right)\,d\overline{x}.
\end{equation*}
The parameter $\varepsilon$ can be interpreted as $\xi_{NI}/D$, where $\xi_{NI}$ is the nematic correlation length that determines isotropic core size \cite{gartland}. With a slight abuse of notation, we will dispose of the bars for the rest of the paper, so that the non-dimensionalized energy is written as
\begin{align}\label{Feps}
\mathcal{F}_{\varepsilon}(Q) =  \int_{\Omega} &\left(\sigma(Q) +  \frac{1}{\varepsilon^2}W(Q)\right)\,dx,
\end{align}
with $\sigma(Q)$ henceforth given by \eqref{rescalesigma} and $W(Q)$ given by \eqref{Wbar}.
We will consider the $\varepsilon \to 0$ limit for $\mathcal{F}_\varepsilon$, which should be understood as the limit in which the nematic correlation length is vanishingly small compared to the size of the domain. \par
Finally, we will require throughout the rest of the article that admissible competitors for $\mathcal{F}_{\varepsilon}$ are subject to a Dirichlet boundary condition
$$
 g \in \textup{Lip}\,(\partial \Omega ;\mathcal{S}).
 $$
 For such $g$, the set $H^1_g(\Omega;\mathcal{S})$ contains competitors which are also Lipschitz, so that the infimum of $\mathcal{F}_{\varepsilon}$ among $H^1_g(\Omega;\mathcal{S})$ is not infinity. Later on, when we are considering the asymptotic behavior of $\mathcal{F}_{\varepsilon}$ as $\varepsilon \to 0$, we will further require that $g$ takes values in $\mathcal{N}$.\par 
\begin{proposition}[Coercivity of $\mathcal{F}_{\varepsilon}$]\label{Lbdh1}There exist constants $C_1=C_1(\{L_i\})>0$ and $C_2=C_2(g)$ such that for any $Q \in H^1_g(\Omega;\mathcal{S})$,
\begin{equation}\label{lbdh1}
\| Q \|_{H^1}^2 \leq C_1\mathcal{F}_{\varepsilon}(Q)+C_2.
\end{equation}
\end{proposition}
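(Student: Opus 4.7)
The plan is to separately control $\|\nabla Q\|_{L^2}^2$ and $\|Q\|_{L^2}^2$ in terms of $\mathcal{F}_\varepsilon(Q)$ and the boundary data. For $\|\nabla Q\|_{L^2}^2$, I would first show that $\sigma(Q)$ pointwise dominates $|\dive Q|^2+|\curl Q|^2$, then convert this into a bound on $\|\nabla Q\|_{L^2}^2$ via a null-Lagrangian argument. The $L^2$ bound on $Q$ then follows from Poincar\'e's inequality applied to $Q-\tilde g$, where $\tilde g\in\mathrm{Lip}(\bar\Omega;\mathcal{S})$ is a Lipschitz extension of $g$.

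The pointwise bound rests on the algebraic identity $\bigl(\tfrac{1}{3}\id + Q\bigr)\dive Q + \bigl(\tfrac{2}{3}\id - Q\bigr)\dive Q = \dive Q$, together with the analogous identity for $\curl Q$. The parallelogram inequality $|a|^2 + |b|^2 \geq \tfrac{1}{2}|a+b|^2$ applied to each pair appearing in \eqref{rescalesigma} yields
\[
\sigma(Q) \;\geq\; \tfrac{1}{4}\min_i L_i \,\bigl(|\dive Q|^2 + |\curl Q|^2\bigr)
\]
pointwise. Since $W\geq 0$, integration gives $\int_\Omega \bigl(|\dive Q|^2+|\curl Q|^2\bigr)\,dx \leq (4/\min_i L_i)\,\mathcal{F}_\varepsilon(Q)$.

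To convert this into a bound on $\|\nabla Q\|_{L^2}^2$, I would apply the identity $|\nabla u|^2 = |\dive u|^2 + |\curl u|^2 + \bigl(\operatorname{tr}(\nabla u)^2 - (\dive u)^2\bigr)$ to each row $Q_j$ of $Q$, noting that the parenthesized term is a null Lagrangian equal to $\partial_i(u_k\partial_k u_i - u_i\partial_k u_k)$. Rather than handling its boundary integral directly, I would apply the identity to $Q-\tilde g\in H^1_0(\Omega;\mathcal{S})$: since the divergence-form integrand vanishes on $\partial\Omega$ when the underlying vector field does, the null Lagrangian integrates to zero on $H^1_0$ maps (by density from $C_c^\infty$ and Stokes's theorem), so that
\[
\int_\Omega |\nabla(Q-\tilde g)|^2\,dx \;=\; \int_\Omega \bigl(|\dive(Q-\tilde g)|^2 + |\curl(Q-\tilde g)|^2\bigr)\,dx.
\]
Expanding and using the triangle inequality together with the uniform Lipschitz bound on $\tilde g$ yields $\int_\Omega |\nabla Q|^2\,dx \leq C\int_\Omega \bigl(|\dive Q|^2 + |\curl Q|^2\bigr)\,dx + C(g)$. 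Combining with the first step gives $\|\nabla Q\|_{L^2}^2 \leq C_1'\mathcal{F}_\varepsilon(Q) + C_2'(g)$, and Poincar\'e applied to $Q-\tilde g\in H^1_0(\Omega;\mathcal{S})$ produces $\|Q\|_{L^2}^2 \leq C\|\nabla Q\|_{L^2}^2 + C(g)$, giving \eqref{lbdh1}.

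The main delicacy lies in the second step, specifically in handling the null-Lagrangian boundary contribution, which in a direct approach would require nontangential traces of $\nabla g$ that are a priori unavailable for $g\in\mathrm{Lip}(\partial\Omega;\mathcal{S})$. The Lipschitz extension device finesses this: one never has to integrate the null Lagrangian on $\Omega$ itself, only on $H^1_0$ maps where it is zero, and the only residue of the boundary data is the controlled quantity $\|\nabla\tilde g\|_{L^\infty}^2|\Omega|$, absorbed into $C_2(g)$.
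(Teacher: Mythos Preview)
Your argument is correct, and in fact each of your three steps differs slightly from the paper's, though all arrive at the same place.

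For the pointwise lower bound on $\sigma$, the paper computes an exact identity via completing the square,
\[
|(\id/3+Q)\dive Q|^2 + |(2\id/3-Q)\dive Q|^2 = \tfrac{1}{2}|\dive Q|^2 + 2|(Q-\id/6)\dive Q|^2,
\]
and then drops the second term; your parallelogram inequality produces the same constant $\tfrac{1}{4}\min_i L_i$ with less algebra. For the null Lagrangian, the paper evaluates it directly as a boundary integral, invoking \cite[Lemma~1.2]{HaKiLi86} to see that it depends only on \emph{tangential} derivatives of $g$---so your worry about needing nontangential traces is unfounded, and the direct approach is legitimate for Lipschitz $g$. Your extension trick, subtracting off a Lipschitz $\tilde g$ and working in $H^1_0$, sidesteps the boundary computation entirely and is arguably more self-contained, at the cost of slightly worse constants from the triangle inequality. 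For the $L^2$ bound, the paper uses the quartic growth of $W$ to control $\|Q\|_{L^2}^2$ by the potential term, whereas you use Poincar\'e on $Q-\tilde g$; your route has the minor advantage that the resulting $C_1$ is manifestly independent of $\varepsilon$, consistent with the proposition's stated dependence $C_1=C_1(\{L_i\})$.
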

\begin{proof}
It is standard to bound $\|Q\|_{L^2}^2$ from above by the potential term in the energy $\mathcal{F}_{\varepsilon}$, so we focus on $\sigma$. First, recall that for any symmetric matrix $M$ and vectors $v_1,v_2$,
$$
\langle Mv_1, v_2 \rangle = \langle v_1, Mv_2 \rangle.
$$
Using this fact, we calculate 
\begin{align}\notag
\left|(\id/3 + Q)\dive Q \right|^2+&\left| (2\id/3-Q)\dive Q\right|^2\\ \notag &= \langle \dive Q , (\id/3 + Q)^2 \dive Q\rangle + \langle \dive Q, (2\id/3 - Q)^2 \dive Q \rangle \\* \notag
&= \langle \dive Q , (2Q^2- 2Q/3+5\id/9)\dive Q \rangle  \\ \notag
&= \left\langle \dive Q , \left(\id/2 + 2\left(Q - \id/6\right)^2\right)\dive Q \right\rangle \\* \label{keydiv}
&= |\dive Q|^2/2 + 2\left|\left(Q - \id / 6\right) \dive Q \right|^2 
\end{align}
Similarly, we have
\begin{align}\notag
\left|(\id/3 + Q)\curl Q \right|^2+&\left| (2\id/3-Q)\curl Q\right|^2\\* \label{keycurl} &=  |\curl Q|^2/2 + 2\left|\left(Q - \id / 6\right) \curl Q \right|^2 
\end{align}
The equalities \eqref{keydiv}-\eqref{keycurl} imply that
\begin{equation}\label{eles}
\frac{1}{2} \min_i \frac{L_i}{2}  \left(\left|\dive Q \right|^2+\left|\curl Q \right|^2\right) \leq \sigma(Q) .
\end{equation}
To bound this from below using $|\nabla Q|^2$, we need the identity
\begin{equation}\label{combine}
|\nabla Q|^2 = \sum_j  \left((\dive Q_j)^2 + |\curl Q_j|^2+ \textup{tr}\, (\nabla Q_j )^2- (\dive Q_j )^2\right)
\end{equation}
 cf. \cite[Lemma 1.4]{HaKiLi86}, where $Q_j$ is the $j$-th row of $Q$. Therefore, with $C_1=4/(\min_{1\leq i \leq 4 } L_i)$ we can combine \eqref{eles} and \eqref{combine} to arrive at
\begin{equation*}
\int_\Omega | \nabla Q|^2 \, dx \leq C_1 \int_\Omega \sigma(Q)\,dx + \int_\Omega \sum_j \left(\textup{tr}\, (\nabla Q_j )^2- (\dive Q_j )^2\right)\, dx.
\end{equation*}
The remainder on the right hand side is a null Lagrangian, and since the boundary data $g$ is Lipschitz, it can be written as
\begin{align}
\int_{\partial\Omega}\left(\sum_j(\nabla_{\mathrm{tan}} g_j) g_j - (\tr \nabla_{\mathrm{tan}} g_j)g_j\right)\cdot \nu \, d\mathcal{H}^2,\label{nul}
\end{align}
which is a constant $C_2$ independent of $Q\in H^1_g(\Omega;\mathcal{S})$.
We refer the reader to \cite[Lemma 1.2]{HaKiLi86} for the derivation of this formula.

\end{proof}\par
To prove the existence of minimizers of $\mathcal{F}_{\varepsilon}$ among $H^1_g(\Omega;\mathcal{S})$ and to prove a $\Gamma$-convergence result, we will need the following proposition.
\begin{proposition}[Lower-semicontinuity of $\sigma$]\label{4.0Lsc}
For any sequence $Q_n$ which converges weakly in $H^1(\Omega;\mathcal{S})$ to $Q \in H^1(\Omega;\mathcal{S})$, we have
\begin{equation}\label{lsceq}
\int_\Omega \sigma(Q) \,dx \leq \liminf_{n \to \infty} \int_{\Omega} \sigma(Q_n) \, dx.
\end{equation}
\end{proposition}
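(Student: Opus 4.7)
The plan is to proceed directly, identifying weak $L^2$ limits of the four vector- or matrix-valued fields comprising $\sigma$. I decompose
\begin{equation*}
\sigma(Q) = \frac{1}{2}\sum_{i=1}^{4} L_i\,|F^{(i)}(Q)|^2, \qquad F^{(i)}(Q) := M_i(Q)\, D_i Q,
\end{equation*}
where each $M_i(Q)$ is one of the symmetric affine expressions $\id/3 + Q$ or $2\id/3 - Q$ and $D_i Q$ is either $\dive Q$ or $\curl Q$; set $F_n^{(i)} := F^{(i)}(Q_n)$. Without loss of generality, $L := \liminf_n \int_\Omega \sigma(Q_n)$ is finite, and I pass to a subsequence attaining this value.

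By Rellich--Kondrachov compactness on the bounded domain $\Omega \subset \R^3$, a further (unrelabeled) subsequence satisfies $Q_n \to Q$ strongly in $L^p(\Omega)$ for every $p < 6$ and pointwise almost everywhere, while $\nabla Q_n \rightharpoonup \nabla Q$ weakly in $L^2(\Omega)$; in particular $\dive Q_n \rightharpoonup \dive Q$ and $\curl Q_n \rightharpoonup \curl Q$ weakly in $L^2$. The uniform bound $\int_\Omega |F_n^{(i)}|^2\,dx \lesssim L$ then allows a further extraction to obtain weak $L^2$ limits $F_n^{(i)} \rightharpoonup G^{(i)}$ with $G^{(i)} \in L^2(\Omega)$.

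The crux of the argument---and the step I anticipate as the main obstacle---is identifying $G^{(i)}$ with $F^{(i)}(Q)$. For a smooth, compactly supported test field $\phi$ of the appropriate type (vector- or matrix-valued), the symmetry of $M_i(Q_n)$ gives
\begin{equation*}
\int_\Omega F_n^{(i)} \cdot \phi\, dx = \int_\Omega D_i Q_n \cdot M_i(Q_n)\phi\, dx.
\end{equation*}
Boundedness of $\phi$ and strong $L^2$ convergence $Q_n \to Q$ yield $M_i(Q_n)\phi \to M_i(Q)\phi$ strongly in $L^2$; pairing this with weak $L^2$ convergence of $D_i Q_n$ forces $\int_\Omega F_n^{(i)} \cdot \phi\, dx \to \int_\Omega M_i(Q) D_i Q \cdot \phi\, dx$. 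Matching this with the weak $L^2$ limit of $F_n^{(i)}$, we conclude $G^{(i)} = M_i(Q) D_i Q$ as distributions, and hence $F^{(i)}(Q) \in L^2(\Omega)$ with $G^{(i)} = F^{(i)}(Q)$ a.e. The delicate point is that $M_i(Q_n)\phi$ is only uniformly bounded in $L^{3/2}$ for general $\phi \in L^2$, so one cannot close the pairing against arbitrary $L^2$ fields; restricting to bounded test fields and passing through distributions---which hinges on the Rellich strong $L^2$ compactness of $Q_n$ and the affine dependence of $M_i$ on $Q$---is essential.

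Finally, I invoke the standard weak lower semicontinuity of the $L^2$-norm: for each $i$,
\begin{equation*}
\int_\Omega |F^{(i)}(Q)|^2\, dx \leq \liminf_n \int_\Omega |F_n^{(i)}|^2\, dx.
\end{equation*}
Summing with weights $L_i/2$ yields \eqref{lsceq}, completing the proof.
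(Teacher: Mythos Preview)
Your proof is correct and follows essentially the same route as the paper: pass to a subsequence realizing the (finite) $\liminf$, use Rellich to get strong $L^2$ convergence of $Q_n$, pair the weak $L^2$ limit of $D_iQ_n$ against $M_i(Q_n)\phi$ for bounded test fields $\phi$ to identify the weak $L^2$ limit of each $F_n^{(i)}$ as $F^{(i)}(Q)$, and conclude via lower semicontinuity of the $L^2$-norm. The paper phrases the identification step as weak-$L^1$ convergence against $\phi\in L^\infty$ followed by uniqueness of weak limits, but this is the same argument in slightly different packaging.
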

\begin{proof}
We focus on the term $|(\id/3 + Q_n)\dive Q_n|^2$; the argument for the other terms is the same. Let us assume that the right hand side of \eqref{lsceq} is finite; if it is not, the proof is trivial. The essence of the subsequent proof is the real analysis fact
\begin{align*}
f_n \to f \textup{ strongly in }L^2, &\textup{ }g_n \to g \textup{ weakly in } L^2, \textup{ }\|f_n g_n \|_{L^2} \leq C < \infty\\* 
&\implies f_ng_n \to f g \textup{ weakly in }L^2 . 
\end{align*}\par
Without loss of generality, we can assume (by restricting to a subsequence) that $\liminf \int_\Omega |(\id/3+Q_n)\dive Q_n|^2 \, dx $ is finite and the sequence of integrals converges to its limit inferior. Let us first recall that weak convergence in $H^1(\Omega;\mathcal{S})$ entails strong convergence in $L^2(\Omega;\mathcal{S})$, so that 
\begin{equation}\label{strong}
\id/3+Q_n \to \id/3+Q \textup{ in }L^2(\Omega;\mathcal{S}).
\end{equation}
Next, since $\dive Q_n$ converges weakly in $L^2(\Omega;\R^3)$ to $\dive Q$ and $\id/3 + Q_n$ converges strongly in $L^2(\Omega;\mathcal{S})$ to $\id/3 + Q$, we have for any $\phi \in L^\infty(\Omega;\mathbb{R}^3)$:
\begin{align*}
\int_\Omega (\id/3+&Q_n)\dive Q_n \cdot \phi\\* &=\int_\Omega \left[(\id/3 +  Q_n)-(\id/3+ Q)\right]\dive Q_n \cdot \phi +\int_\Omega (\id/3 + Q)\dive Q_n \cdot \phi\\
&\to \int_\Omega (\id/3+Q)\dive Q \cdot \phi .
\end{align*}
Thus $$(\id/3 + Q_n)\dive Q_n \to (\id/3 + Q)\dive Q\textup{ weakly in }L^1(\Omega;\R^3).$$ Now from the uniform $L^2$ bound on $(\id/3+ Q_n)\dive Q_n$, we have that (up to a subsequence) $$(\id/3 + Q_n)\dive Q_n \to h \textup{ weakly in }L^2(\Omega;\mathbb{R}^3)$$ for some $h \in L^2(\Omega;\mathbb{R}^3)$. But from the previous observation and the uniqueness of weak limits, we deduce that the weak $L^2$-limit $h$ must coincide with $(\id/3 + Q)\dive Q $, the weak $L^1$-limit.  The inequality
\begin{equation*}
\int_\Omega |(\id/3 + Q)\dive Q|^2 \, dx \leq \liminf_{n \to \infty} \int_\Omega |(\id/3 + Q_n)\dive Q_n|^2 \, dx
\end{equation*}
now follows from the lower semicontinuity of the $L^2$-norm under weak convergence. Repeating the same argument for the other terms in $\sigma$ concludes the proof.
\end{proof}
{
\begin{remark}
The proposition \ref{4.0Lsc} should hold for other quartic elastic energies, provided the energy is strongly lower-semicontinuous and convex  \cite{DaGa98}. This would also allow for generalization of the $\Gamma$-convergence result (Theorem \ref{tt1}) to other models with quartic elastic energies.
\end{remark}
}
Next, we turn our attention to the existence of minimizers of $\mathcal{F}_{\varepsilon}$. 
\begin{theorem}[Existence of a minimizer]\label{existence}
For any $\varepsilon>0$ and Lipschitz $g:\partial \Omega \to \mathcal{S}$, there exists $Q_0$ which minimizes $\mathcal{F}_{\varepsilon}$ within $H^1_g(\Omega;\mathcal{S})$.
\end{theorem}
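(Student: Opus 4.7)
The plan is to apply the direct method in the calculus of variations, with the heavy lifting already accomplished in Propositions \ref{Lbdh1} and \ref{4.0Lsc}. First, I would verify that the infimum is finite. Since $g\in\mathrm{Lip}(\partial\Omega;\mathcal{S})$, there is a Lipschitz extension $\tilde{g}:\overline{\Omega}\to\mathcal{S}$ (the space $\mathcal{S}$ is a linear subspace of $M^{3\times 3}$, so component-wise Lipschitz extension preserves membership). Such $\tilde{g}$ belongs to $H^1_g(\Omega;\mathcal{S})$ and yields $\mathcal{F}_\varepsilon(\tilde{g})<\infty$, because $\sigma(\tilde{g})$ is a polynomial of degree four in the bounded quantities $\tilde{g}$ and $\nabla\tilde{g}$, and similarly for $W(\tilde{g})$.

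Next, I would select a minimizing sequence $\{Q_n\}\subset H^1_g(\Omega;\mathcal{S})$ with $\mathcal{F}_\varepsilon(Q_n)\to \inf_{H^1_g}\mathcal{F}_\varepsilon$. By Proposition \ref{Lbdh1}, the bound $\|Q_n\|_{H^1}^2\le C_1\mathcal{F}_\varepsilon(Q_n)+C_2$ gives a uniform $H^1$-bound, so after passing to a subsequence (not relabeled) I would obtain $Q_n\rightharpoonup Q_0$ weakly in $H^1(\Omega;\mathcal{S})$. The trace operator is continuous for weak $H^1$-convergence, hence $Q_0=g$ on $\partial\Omega$; moreover, symmetry and tracelessness, being linear pointwise constraints preserved under weak $L^2$-limits, ensure $Q_0(x)\in\mathcal{S}$ almost everywhere. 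Thus $Q_0\in H^1_g(\Omega;\mathcal{S})$ is an admissible competitor.

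It remains to verify lower semicontinuity of $\mathcal{F}_\varepsilon$. The elastic contribution is handled directly by Proposition \ref{4.0Lsc}, giving $\int_\Omega\sigma(Q_0)\,dx\le\liminf_n\int_\Omega\sigma(Q_n)\,dx$. For the potential term, I would invoke the Rellich--Kondrachov compactness embedding $H^1(\Omega)\hookrightarrow\hookrightarrow L^p(\Omega)$ for any $p<6$ (recall $\Omega\subset\mathbb{R}^3$), so $Q_n\to Q_0$ strongly in $L^4(\Omega;\mathcal{S})$. Since $W$ is a polynomial of degree four in the entries of $Q$, the dominated convergence theorem together with the $L^4$-bound gives $\int_\Omega W(Q_n)\,dx\to \int_\Omega W(Q_0)\,dx$. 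Combining these facts yields
\[
\mathcal{F}_\varepsilon(Q_0)\le \liminf_{n\to\infty}\mathcal{F}_\varepsilon(Q_n)=\inf_{H^1_g(\Omega;\mathcal{S})}\mathcal{F}_\varepsilon,
\]
so $Q_0$ is the desired minimizer.

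No step should present a serious obstacle, since the non-trivial work — coercivity against the non-convex elastic density \eqref{rescalesigma} and weak lower semicontinuity of the matrix-valued quadratic forms $|(\id/3+Q)\dive Q|^2$, etc. — has already been carried out in Propositions \ref{Lbdh1} and \ref{4.0Lsc}. The only mildly delicate point is the handling of the quartic potential $W$; one must use the quantitative Sobolev embedding dimension $n=3$ to ensure $L^4$-strong convergence and thereby pass to the limit in $\int W(Q_n)$. The remainder of the argument is a textbook application of the direct method.
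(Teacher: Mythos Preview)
Your proof is correct and follows essentially the same route as the paper's own argument: direct method with coercivity from Proposition~\ref{Lbdh1}, weak $H^1$ lower semicontinuity of $\sigma$ from Proposition~\ref{4.0Lsc}, and Rellich--Kondrachov to upgrade to strong $L^4$ convergence for the potential term. The extra details you supply---explicit Lipschitz extension of $g$ to justify finiteness of the infimum, and verification that the weak limit remains in $H^1_g(\Omega;\mathcal{S})$---are left implicit in the paper but are entirely appropriate.
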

\begin{proof}
Fix $\varepsilon>0$ and $g$ as stated in the theorem. By virtue of the previous two propositions, the existence of a minimizer will follow without difficulty from the direct method in the calculus of variations.\par
Let $\{ Q_n\}$ be a sequence such that
\begin{equation*}
\lim_{n\to \infty} \mathcal{F}_{\varepsilon}(Q_n) = \inf \{\mathcal{F}_{\varepsilon}(Q): Q \in H^1_g(\Omega;\mathcal{S})  \}.
\end{equation*}
As noted earlier, since $g$ is Lipschitz, the infimum is not $\infty$. Then by Proposition \ref{Lbdh1}, we have a uniform $H^1$ bound on $\{ Q_n \}$ and a subsequence, which we still refer to as $\{ Q_n\}$, converging weakly in $H^1$ to some $Q_0\in H^1_g(\Omega;\mathcal{S})$. Proposition \ref{4.0Lsc} then yields
\begin{equation}\label{elas}
\int_\Omega \sigma(Q_0) \, dx \leq \liminf_{n\to \infty} \int_\Omega \sigma(Q_n)\,dx.
\end{equation}
By Rellich's theorem, we may assume as well that $Q_n$ converge in $L^4$ to $Q_0$, from which we deduce
\begin{equation}\label{pot}
\int_\Omega W(Q_0) \, dx = \lim_{n \to \infty}\int_\Omega W(Q_n)\,dx.
\end{equation}
The minimality of $Q_0$ is now a consequence of \eqref{elas} and \eqref{pot}.
\end{proof}
We are now interested in the asymptotic behavior of minimizers of $\mathcal{F}_{\varepsilon}$ as $\varepsilon\to 0$. Let us begin by identifying a limiting functional. In the limit $\varepsilon \to 0$, it is clear that competitors with finite energy will have to take values in $\mathcal{N}$, the well of $W$, cf. \eqref{newn}. Assume that in addition to being Lipschitz, the boundary data $g$ takes values in $\mathcal{N}$, so that $W(g)=0$. The set of satisfactory boundary data includes, for example, $g$ which are formed using a Lipschitz vector field $n:\partial \Omega \to \mathbb{S}^2$ and considering the tensor field
\begin{equation}
n\otimes n - \frac{1}{3}\id :\partial \Omega \to \mathcal{N},
\end{equation}
cf. \cite[Lemma 1.1]{HaKiLi86}. The limiting functional $\mathcal{F}_{0}$ is then defined by 
\begin{equation}\notag
\mathcal{F}_{0}(Q)= \begin{cases} 
      \displaystyle\int_\Omega \sigma(Q)\,dx & \textup{if }Q \in H^1_g(\Omega;\mathcal{N}), \\
      \infty & \textup{otherwise,}
   \end{cases}
\end{equation} with $\sigma$ given by \eqref{rescalesigma}.\par
Let us point out a key feature of the limiting model: $\mathcal{F}_{0}$ coincides with the Oseen-Frank model, in the sense of Proposition \ref{reduction}.
The question of when minimizing $\mathcal{F}_0$ among $Q$-tensor fields coincides with minimizing the version of the Oseen-Frank energy above is more delicate. There may well be strictly more competitors in the space of $Q$-tensors than in the space of $\mathbb{S}^2$-valued fields $n$ due to the possible `non-orientability' of a $Q$-tensor field. It has been shown in \cite{BaZa11} that when $\Omega \subset \mathbb{R}^k$ is simply connected and $k=2,3$, every $Q \in H^1(\Omega;\mathcal{N})$ has a lifting $n^Q\in H^1(\Omega;\mathbb{S}^2)$ such that 
\begin{equation*}
Q= n^Q\otimes n^Q-\id/3  .
\end{equation*}
If $\Omega$ is not simply connected or if $p<2$, then there might exist tensor fields which cannot be `oriented' to produce a globally defined corresponding director. We refer the reader to \cite{BaZa11} for a more detailed treatment.\par
We state the first of two theorems regarding the asymptotic behavior of $\mathcal{F}_{\varepsilon}$ and its minimizers.
\begin{theorem}[$\Gamma$-convergence]\label{tt1}
For any choice of boundary data $g$ as above, the sequence $\{\mathcal{F}_{\varepsilon}\}$ $\Gamma$--converges in the weak topology of $H^1_g(\Omega;\mathcal{S})$ to $\mathcal{F}_0$. That is, 
\begin{enumerate}
\item For any $Q\in H^1_g(\Omega;\mathcal{S})$ and for any sequence $\{Q_\varepsilon\}$ in $H^1_g(\Omega;\mathcal{S})$, 
\begin{equation}\label{4.0lsceq}
	Q_\varepsilon \rightharpoonup Q \text{ in } H^1_g(\Omega;\mathcal{S}) \text{ implies } \liminf_{\varepsilon \rightarrow 0} \mathcal{F}_{\varepsilon}(Q_\varepsilon) \geq \mathcal{F}_0(Q),
\end{equation}
and
\item For each $Q\in H^1_g(\Omega;\mathcal{S})$ there exists a recovery sequence $\{Q_\varepsilon\}$ in $H^1_g(\Omega;\mathcal{S})$ satisfying
\begin{equation}\label{4.0reco1}
Q_\varepsilon \rightharpoonup Q_0 \text{ in } H^1_g(\Omega;\mathcal{S}),
\end{equation}
\begin{equation}\label{4.0reco2}
\lim_{\varepsilon\rightarrow 0} \mathcal{F}_{\varepsilon}(Q_\varepsilon)= \mathcal{F}_0(Q_0).
\end{equation}
\end{enumerate}
\end{theorem}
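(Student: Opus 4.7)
The plan is to split into the standard pair of $\Gamma$-limit inequalities and to exploit two structural features of $\mathcal{F}_{\varepsilon}$: the elastic density $\sigma$ carries no $\varepsilon$-dependence, and the bulk density $W$ is non-negative with zero set exactly $\mathcal{N}$. Consequently the recovery sequence will be the trivial constant sequence, and the substantive content is concentrated in the liminf inequality together with the verification that a weak limit of finite-energy configurations must land in $\mathcal{N}$ almost everywhere.

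For the liminf inequality, assume $Q_\varepsilon \rightharpoonup Q$ in $H^1_g(\Omega;\mathcal{S})$ and suppose $\liminf_{\varepsilon \to 0}\mathcal{F}_{\varepsilon}(Q_\varepsilon) = C < \infty$ (otherwise there is nothing to prove). Passing to a subsequence along which the liminf is attained, one has $\int_\Omega W(Q_\varepsilon)\,dx \leq (C+1)\varepsilon^2 \to 0$. Rellich's theorem gives $Q_\varepsilon \to Q$ strongly in $L^2(\Omega;\mathcal{S})$, and a further subsequence converges a.e.; continuity of $W$ and Fatou's lemma then force
\[
\int_\Omega W(Q)\,dx \leq \liminf_{\varepsilon \to 0}\int_\Omega W(Q_\varepsilon)\,dx = 0,
\]
so $W(Q) = 0$ a.e., placing $Q(x) \in \mathcal{N}$ a.e.\ and hence $Q \in H^1_g(\Omega;\mathcal{N})$. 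Proposition \ref{4.0Lsc} together with the non-negativity of $W$ then yields
\[
\mathcal{F}_0(Q) = \int_\Omega \sigma(Q)\,dx \leq \liminf_{\varepsilon\to 0} \int_\Omega \sigma(Q_\varepsilon)\,dx \leq \liminf_{\varepsilon \to 0}\mathcal{F}_{\varepsilon}(Q_\varepsilon),
\]
completing part (1).

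For the recovery sequence, given $Q \in H^1_g(\Omega;\mathcal{S})$ take $Q_\varepsilon \equiv Q$, so that \eqref{4.0reco1} holds trivially. If $Q \in H^1_g(\Omega;\mathcal{N})$ then $W(Q) = 0$ a.e.\ and $\mathcal{F}_{\varepsilon}(Q_\varepsilon) = \int_\Omega \sigma(Q)\,dx = \mathcal{F}_0(Q)$ for every $\varepsilon$, giving \eqref{4.0reco2}. If instead $Q \notin H^1_g(\Omega;\mathcal{N})$, then $W(Q(x)) > 0$ on a set of positive measure, so $\int_\Omega W(Q)\,dx > 0$ and
\[
\mathcal{F}_{\varepsilon}(Q_\varepsilon) \geq \frac{1}{\varepsilon^2}\int_\Omega W(Q)\,dx \to \infty = \mathcal{F}_0(Q),
\]
again verifying \eqref{4.0reco2}.

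The only genuinely delicate step is extracting the pointwise constraint $Q(x) \in \mathcal{N}$ a.e.\ from the integrated decay $\int_\Omega W(Q_\varepsilon)\,dx = O(\varepsilon^2)$; this is handled by combining the compact embedding $H^1(\Omega) \hookrightarrow L^2(\Omega)$ with a.e.\ subsequence extraction and Fatou's lemma applied to the continuous non-negative density $W$. Once this is in place, the rest of the liminf inequality reduces to the previously established lower-semicontinuity of $\sigma$ in Proposition \ref{4.0Lsc}, while the recovery step is immediate precisely because $\sigma$ does not depend on $\varepsilon$ and no profile transition across an interface is required.
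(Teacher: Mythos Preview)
Your proof is correct and follows the same strategy as the paper's: invoke Proposition~\ref{4.0Lsc} for the liminf inequality and take the constant sequence $Q_\varepsilon \equiv Q$ for recovery. In fact you are more careful than the paper, which simply cites Proposition~\ref{4.0Lsc} without explicitly verifying that a finite liminf forces $Q \in H^1_g(\Omega;\mathcal{N})$; your Fatou/Rellich argument cleanly supplies that missing step, and your case split in the limsup construction makes explicit why the constant sequence also works when $\mathcal{F}_0(Q)=\infty$.
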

Before we present the proof, we state a compactness proposition, which follows immediately from Proposition \ref{Lbdh1}.
\begin{proposition}[Compactness]\label{compactness}
Let $\{Q_\varepsilon \}$ be a sequence of maps from $\Omega$ to $\mathcal{S}$, and assume that the sequence of energies $\mathcal{F}_{\varepsilon}(Q_\varepsilon)$ is uniformly bounded. Then there exists a subsequence $\{Q_{\varepsilon_j} \}$ and $Q \in H^1_g(\Omega;\mathcal{N})$ such that $Q_{\varepsilon_j} \rightharpoonup Q$ in $H^1(\Omega;\mathcal{S})$.
\end{proposition}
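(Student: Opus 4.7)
The plan is to combine the coercivity estimate of Proposition~\ref{Lbdh1} with a standard argument showing that the limit must lie in the zero-set $\mathcal{N}$ of $W$, given that the potential enters $\mathcal{F}_{\varepsilon}$ with the singular weight $1/\varepsilon^2$.

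First, I would use the uniform energy bound together with Proposition~\ref{Lbdh1} to conclude that $\|Q_\varepsilon\|_{H^1}^2 \leq C_1 \mathcal{F}_\varepsilon(Q_\varepsilon) + C_2$ is uniformly bounded. Since $H^1(\Omega;\mathcal{S})$ is reflexive, I extract a subsequence $\{Q_{\varepsilon_j}\}$ converging weakly in $H^1(\Omega;\mathcal{S})$ to some limit $Q$. Continuity of the trace operator on $H^1$ ensures that the boundary values are preserved under weak convergence, so $Q \in H^1_g(\Omega;\mathcal{S})$.

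The main step is to show $Q(x) \in \mathcal{N}$ for a.e.\ $x \in \Omega$. Because $\sigma(Q_\varepsilon) \geq 0$ and the total energies $\mathcal{F}_{\varepsilon_j}(Q_{\varepsilon_j})$ are uniformly bounded by some $M$, one has
\begin{equation*}
\int_\Omega W(Q_{\varepsilon_j})\,dx \leq \varepsilon_j^2 M \to 0.
\end{equation*}
By Rellich's theorem, after passing to a further subsequence, $Q_{\varepsilon_j} \to Q$ strongly in $L^4(\Omega;\mathcal{S})$ and pointwise almost everywhere. Since $W$ is a continuous, non-negative polynomial in the components of $Q$, Fatou's lemma yields
\begin{equation*}
\int_\Omega W(Q)\,dx \leq \liminf_{j\to\infty} \int_\Omega W(Q_{\varepsilon_j})\,dx = 0,
\end{equation*}
so $W(Q) = 0$ a.e., and hence $Q(x) \in \mathcal{N}$ a.e.\ by definition of $\mathcal{N}$ as the zero-set of $W$. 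Thus $Q \in H^1_g(\Omega;\mathcal{N})$.

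There is no real obstacle here; the argument is entirely standard once Proposition~\ref{Lbdh1} is in hand. The only mild subtlety is the passage from weak $H^1$-convergence to pointwise convergence, which is handled by the Rellich embedding together with the extraction of an a.e.\ convergent subsequence.
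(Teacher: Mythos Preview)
Your argument is correct and is precisely the standard argument the paper has in mind; indeed, the paper does not even write out a proof, merely noting that the result ``follows immediately from Proposition~\ref{Lbdh1}.'' Your proposal supplies the routine details (weak compactness from the coercivity bound, weak closedness of $H^1_g$, and the $\varepsilon_j^2$-decay of $\int_\Omega W(Q_{\varepsilon_j})$ forcing $Q\in\mathcal{N}$ a.e.\ via Rellich and Fatou) that the authors leave implicit.
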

\begin{proof}[Proof of Theorem \ref{tt1}]
The lower-semicontinuity condition \eqref{4.0lsceq} has been proved in Proposition \ref{4.0Lsc}. For the construction of a recovery sequence given some $Q_0$, we can simply take $Q_\varepsilon=Q_0$ for all $\varepsilon$.
\end{proof}
Finally, we prove
\begin{theorem}[$H^1$-convergence of a subsequence of minimizers]\label{strongcon}
For any sequence of minimizers $\{ Q_\varepsilon \}$ of $\mathcal{F}_{\varepsilon}$, there exists $Q_0\in H^1_g(\Omega;\mathcal{N})$ which minimizes $\mathcal{F}_0$ and a subsequence $\{ Q_{\varepsilon_j}\}$ converging to $Q_0$ strongly in $H^1_g(\Omega;\mathcal{S})$. 
\end{theorem}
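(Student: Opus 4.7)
Let $\{Q_{\varepsilon}\}$ be minimizers of $\mathcal{F}_\varepsilon$. Since $g$ is Lipschitz, $\sup_\varepsilon \mathcal{F}_\varepsilon(Q_\varepsilon) \leq \mathcal{F}_\varepsilon(\tilde g)$ for a fixed Lipschitz extension $\tilde g$ of $g$, giving a uniform energy bound. By the compactness proposition (Proposition \ref{compactness}) we may extract a subsequence, still denoted $\{Q_{\varepsilon_j}\}$, with $Q_{\varepsilon_j} \rightharpoonup Q_0$ weakly in $H^1_g(\Omega;\mathcal{S})$ and $Q_0 \in H^1_g(\Omega;\mathcal{N})$. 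Combining the liminf inequality from Theorem \ref{tt1} with the recovery-sequence choice $\tilde Q_\varepsilon \equiv Q$ for any competitor $Q \in H^1_g(\Omega;\mathcal{N})$ shows $Q_0$ is a minimizer of $\mathcal{F}_0$ and $\mathcal{F}_{\varepsilon_j}(Q_{\varepsilon_j}) \to \mathcal{F}_0(Q_0) = \int_\Omega \sigma(Q_0)\,dx$. Because $W(Q_{\varepsilon_j}) \geq 0$ and $\int \sigma(Q_0) \leq \liminf \int \sigma(Q_{\varepsilon_j})$ by Proposition \ref{4.0Lsc}, we conclude
\begin{equation*}
\int_\Omega \sigma(Q_{\varepsilon_j})\,dx \longrightarrow \int_\Omega \sigma(Q_0)\,dx \quad\text{and}\quad \frac{1}{\varepsilon_j^2}\int_\Omega W(Q_{\varepsilon_j})\,dx \longrightarrow 0.
\end{equation*}

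\textbf{Upgrading to strong convergence of each elastic term.} Introduce the four vector fields
\begin{equation*}
A^1_j := (\tfrac{1}{3}\id + Q_{\varepsilon_j})\dive Q_{\varepsilon_j},\quad A^2_j := (\tfrac{1}{3}\id + Q_{\varepsilon_j})\curl Q_{\varepsilon_j},
\end{equation*}
\begin{equation*}
A^3_j := (\tfrac{2}{3}\id - Q_{\varepsilon_j})\dive Q_{\varepsilon_j},\quad A^4_j := (\tfrac{2}{3}\id - Q_{\varepsilon_j})\curl Q_{\varepsilon_j},
\end{equation*}
together with their counterparts $A^i_0$ defined from $Q_0$. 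The proof of Proposition \ref{4.0Lsc} shows that each $A^i_j \rightharpoonup A^i_0$ weakly in $L^2$, whence $\|A^i_0\|_{L^2}^2 \leq \liminf_j \|A^i_j\|_{L^2}^2$. Since $\sum_i \frac{L_i}{2}\|A^i_j\|_{L^2}^2 \to \sum_i \frac{L_i}{2}\|A^i_0\|_{L^2}^2$ and each $L_i>0$, a standard elementary argument (if the sum of nonnegative quantities converges and each liminf dominates the target value, then each term converges) forces $\|A^i_j\|_{L^2} \to \|A^i_0\|_{L^2}$ for every $i$. Weak convergence together with norm convergence in the Hilbert space $L^2$ yields strong convergence $A^i_j \to A^i_0$ in $L^2$ for $i=1,2,3,4$.

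\textbf{From the $A^i$'s to strong $H^1$ convergence.} Observe that $A^1_j + A^3_j = \dive Q_{\varepsilon_j}$ and $A^2_j + A^4_j = \curl Q_{\varepsilon_j}$, so the previous step gives $\dive Q_{\varepsilon_j} \to \dive Q_0$ and $\curl Q_{\varepsilon_j} \to \curl Q_0$ strongly in $L^2$. Now I invoke the identity \eqref{combine} row-by-row,
\begin{equation*}
|\nabla Q|^2 = \sum_{k} \bigl( (\dive Q_k)^2 + |\curl Q_k|^2 + \tr(\nabla Q_k)^2 - (\dive Q_k)^2 \bigr),
\end{equation*}
where the last bracket is a null Lagrangian whose integral reduces, via the divergence theorem, to the boundary expression \eqref{nul} depending only on $g$ and its tangential derivatives. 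Consequently, for every $Q \in H^1_g(\Omega;\mathcal{S})$,
\begin{equation*}
\int_\Omega |\nabla Q|^2\,dx = \int_\Omega \bigl(|\dive Q|^2 + |\curl Q|^2\bigr)\,dx + C_2(g).
\end{equation*}
Applying this to $Q_{\varepsilon_j}$ and passing to the limit yields $\|\nabla Q_{\varepsilon_j}\|_{L^2} \to \|\nabla Q_0\|_{L^2}$. Combined with the weak convergence $Q_{\varepsilon_j} \rightharpoonup Q_0$ in $H^1$, norm convergence in the Hilbert space $H^1$ upgrades to strong convergence, completing the proof.

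\textbf{Main obstacle.} The delicate point is the passage from convergence of the sum $\int \sigma(Q_{\varepsilon_j}) \to \int \sigma(Q_0)$ to convergence of each of the four constituent $L^2$ norms $\|A^i_j\|_{L^2}$. The key is that each individual term is separately lower semicontinuous under weak $H^1$ convergence (from the proof of Proposition \ref{4.0Lsc}), so no cancellation can occur, and positivity of every $L_i$ lets us isolate each summand. After that, the algebraic trick of adding $A^1_j$ and $A^3_j$ to recover $\dive Q_{\varepsilon_j}$ (and analogously for the curl) is the decisive observation that bypasses the more delicate task of inverting the $Q$-dependent matrix factors $\frac{1}{3}\id + Q$ and $\frac{2}{3}\id - Q$, whose invertibility properties one would otherwise have to control in $L^p$.
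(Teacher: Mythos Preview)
Your proof is correct and follows the same overall strategy as the paper—show energy convergence via $\Gamma$-convergence, decompose $\sigma$ into separately weakly-lower-semicontinuous pieces whose individual norms must therefore converge, and finally use the null Lagrangian identity to convert this into convergence of $\|\nabla Q\|_{L^2}$—but your execution differs from the paper's in two places, and in both cases your route is somewhat cleaner.

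First, the paper rewrites $\sigma$ via the quadratic identities \eqref{keydiv}--\eqref{keycurl}: after peeling off a copy of each elastic term with the smallest coefficient $\min_i L_i$, it obtains an auxiliary density $\tilde\sigma$ containing $\tfrac14|\nabla Q|^2$ explicitly, plus five additional nonnegative lower-semicontinuous terms. This forces the case assumption $L_1=\min_i L_i$ (with a remark that the other cases are analogous). You instead keep the four original terms $A^i_j$ and, after upgrading each to strong $L^2$ convergence, use the \emph{linear} identity $A^1_j+A^3_j=\dive Q_{\varepsilon_j}$ and $A^2_j+A^4_j=\curl Q_{\varepsilon_j}$ to recover $\dive Q_{\varepsilon_j}$ and $\curl Q_{\varepsilon_j}$ directly. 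This sidesteps any dependence on which $L_i$ is smallest and avoids the quadratic rewriting altogether.

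Second, the paper handles the null Lagrangian by extending every map by a fixed $P$ to a compactly supported element of $H^1_0(\Omega';\mathcal S)$ on a larger domain $\Omega'$, so that the boundary contribution vanishes and $\int_{\Omega'}(|\dive Q|^2+|\curl Q|^2)=\int_{\Omega'}|\nabla Q|^2$ holds exactly. You stay on $\Omega$ and invoke the boundary formula \eqref{nul} directly, which is legitimate since $g$ is Lipschitz and the null Lagrangian depends only on the trace; the constant $C_2(g)$ cancels when comparing $Q_{\varepsilon_j}$ with $Q_0$. Both routes yield $\|\nabla Q_{\varepsilon_j}\|_{L^2}\to\|\nabla Q_0\|_{L^2}$, but yours avoids the extension construction. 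One cosmetic remark: $A^2_j$ and $A^4_j$ are matrix fields rather than vector fields, since $\curl Q$ is $3\times 3$; this does not affect the argument.
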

\begin{proof}
We will prove the theorem under the assumption that
\begin{equation}\label{assume}
\min_{1 \leq i \leq 4}L_i =L_1=1>0;
\end{equation}
the proof when one of the other $L_i$'s is the smallest is similar. Appealing to Proposition \ref{compactness} yields a subsequence $\{ Q_{\varepsilon_j}\}$, which we will call $\{Q_j\}$ for convenience, such that $Q_j$ converges weakly in $H^1(\Omega;\mathcal{S})$ to some $Q_0\in H^1_g(\Omega;\mathcal{N})$ of $\mathcal{F}_0$. It is a classical fact from the theory of $\Gamma$-convergence that $Q_0$ minimizes $\mathcal{F}_0$. To show the strong convergence of $Q_j$, we will use the fact that weak convergence together with convergence of norms implies strong convergence.\par
It will be necessary to first extend our maps to compactly supported Sobolev maps on a larger domain. Let $\Omega' \subset \R^3$ be a smooth domain such that $\Omega$ is compactly contained in $\Omega'$, and let $P: \Omega' \setminus \Omega \to \mathcal{S}$ be an $H^1$ function such that its trace on $\partial \Omega$ is $g$ and its trace on $\partial \Omega'$ is 0. We extend each $Q_{j}$ to a map $\tilde{Q}_{j} \in H_0^1(\Omega' ; \mathcal{S})$ via
\[  \tilde{Q}_{j}(x) = \begin{cases} 
      Q_{j}(x)  & x\in \Omega, \\
      P(x) &  x \in \Omega' \setminus \Omega,
   \end{cases}
\]
and similarly for $Q_0$. We will now use the calculations \eqref{keydiv}-\eqref{keycurl} from Proposition \ref{Lbdh1} to extract a $|\nabla Q|^2$ term from $\sigma$. First, from \eqref{keydiv}-\eqref{keycurl} we have
\begin{align*}
\notag
&\int_{\Omega'} \left(\frac{1}{2} |(\id/3 + \tilde{Q}_j)\dive \tilde{Q}_j|^2+ \frac{L_2}{2} \left|(\id/3+  \tilde{Q}_j) \curl \tilde{Q}_j\right|^2 \right.\\*
&\left.\quad\quad+\frac{L_3}{2}\left|(2\id/3 - \tilde{Q}_j) \dive \tilde{Q}_j \right|^2 +\frac{L_4}{2}|(2\id/3 -\tilde{Q}_j)\curl \tilde{Q}_j |^2 \right)\,dx\\
&\quad=\frac{1}{2}\int_{\Omega'} \left( |(\id/3 + \tilde{Q}_j)\dive \tilde{Q}_j|^2+  |(\id/3 + \tilde{Q}_j) \curl \tilde{Q}_j|^2 \right.\\*
&\left.\quad\quad+|(2\id/3 - \tilde{Q}_j) \dive \tilde{Q}_j |^2 +|(2\id/3 - \tilde{Q}_j)\curl \tilde{Q}_j |^2 \right.\\*
&\quad\quad + \left. \frac{L_2-1}{2} \left|(\id/3 + \tilde{Q}_j) \curl \tilde{Q}_j\right|^2+\frac{L_3-1}{2}\left|(2\id/3 - \tilde{Q}_j) \dive \tilde{Q}_j \right|^2 \right.\\
&\left.\quad\quad +\frac{L_4-1}{2}|(2\id/3-\tilde{Q}_j)\curl\tilde{Q}_j|^2 \right)\,dx\\
&\quad=\frac{1}{2}\int_{\Omega'} \left( \left(|\dive \tilde{Q}_j|^2 +|\curl \tilde{Q}_j|^2\right)/2 + 2|(\tilde{Q}_j- \id/6 )\dive\tilde{Q}_j|^2\right.\\
&\quad\quad \left.+ 2|(\tilde{Q}_j- \id/6 )\curl\tilde{Q}_j|^2+ \frac{L_2-1}{2} \left|(\id/3 + \tilde{Q}_j) \curl \tilde{Q}_j\right|^2 \right.\\
&\left.\quad\quad+\frac{L_3-1}{2}\left|(2\id/3 - \tilde{Q}_j) \dive \tilde{Q}_j \right|^2 +\frac{L_4-1}{2}|(2\id/3-\tilde{Q}_j)\curl\tilde{Q}_j|^2 \right)\,dx.
\end{align*}
But for any smooth $Q \in H_0^1(\Omega';\mathcal{S})$, we have by \eqref{combine} and \eqref{nul}
\begin{align}\notag
\int_{\Omega'} \left( |\dive Q|^2 + |\curl Q|^2 \right) \,dx &= \int_{\Omega'} \left(|\nabla Q|^2 + \textup{null Lagrangian}\right)\,dx\\ \label{byparts} &=\int_{\Omega'} |\nabla Q|^2 \,dx.
\end{align}
Hence \eqref{byparts} holds for all $Q \in H_0^1(\Omega' ; \mathcal{S})$ by density. Plugging this equality into our expression for $\int_{\Omega'} \sigma (\tilde{Q}_j)\,dx$ gives
\begin{align} \label{mess}
\int_{\Omega'}& \sigma(\tilde{Q}_j) \, dx \\* \notag
 &= \frac{1}{2}\int_{\Omega'} \left( |\nabla \tilde{Q}_j|^2/2 + 2|(\tilde{Q}_j-\id/6 )\dive\tilde{Q}_j|^2\right. \\ \notag
&\quad\quad \left.+ 2|(\tilde{Q}_j-\id/6 )\curl\tilde{Q}_j|^2 + \frac{L_2-1}{2} \left|(\id/3 + \tilde{Q}_j) \curl \tilde{Q}_j\right|^2 \right.\\ \notag
&\left.\quad\quad+\frac{L_3-1}{2}\left|(2\id/3 - \tilde{Q}_j) \dive \tilde{Q}_j \right|^2 +\frac{L_4-1}{2}|(2\id/3-\tilde{Q}_j)\curl\tilde{Q}_j|^2 \right)\,dx.
\end{align}
Define $\tilde{\sigma}(\tilde{Q}_j)$ to be the sum of the integrands of the right hand side of \eqref{mess}, so that
\begin{align*}
\int_{\Omega'}\sigma(\tilde{Q}_j)\,dx = \int_{\Omega'}\tilde{\sigma}(\tilde{Q}_j)\,dx.
\end{align*}
By the exact same argument as in Proposition \ref{4.0Lsc}, each of the individual terms in $\int_{\Omega'} \tilde{\sigma}(\tilde{Q}_j)\,dx$ is lower-semicontinuous with respect to weak $H^1$-convergence. Therefore,
\begin{align}
 \int_{\Omega'} \tilde{\sigma}(\tilde{Q}_0)\, dx \leq \liminf_{j \to \infty} \int_{\Omega'} \tilde{\sigma}(\tilde{Q}_j)\, dx. \label{half}
\end{align}
But since $Q_j$ minimizes $\mathcal{F}_{\varepsilon_j}$, we also have  
\begin{align}\notag
\limsup_{j \to \infty} \int_{\Omega'} \tilde{\sigma}(\tilde{Q}_j)\, dx 
&= \limsup_{j \to \infty} \int_{\Omega'} \sigma(\tilde{Q}_j)\, dx \\ \notag
&\leq  \limsup_{j \to \infty} \int_{\Omega'} \left( \sigma(\tilde{Q}_j)+ \frac{1}{\varepsilon_j} W(\tilde{Q}_j)\right)\, dx \\ \notag
& \leq \int_{\Omega'} \sigma(\tilde{Q}_0)\, dx \\
& = \int_{\Omega'} \tilde{\sigma}(\tilde{Q}_0)\, dx . \label{otherhalf}
\end{align}
Together, \eqref{half} and \eqref{otherhalf} give 
$$
\int_{\Omega'} \tilde{\sigma}(\tilde{Q}_0)\, dx =\lim_{j \to \infty} \int_{\Omega'} \tilde{\sigma}(\tilde{Q}_j)\, dx.
$$
Since each separate term in $\tilde{\sigma}$ is lower-semicontinuous, it must be the case that
\begin{equation}
\int_{\Omega'} |\nabla \tilde{Q}|^2=\lim_{j \to \infty} \int_{\Omega'} |\nabla \tilde{Q}_j|^2 \, dx.\label{normcon}
\end{equation}
From the weak $H^1$-convergence, we know that $\nabla \tilde{Q}_j \to \nabla \tilde{Q}_0$ weakly in $L^2$, which in conjunction with \eqref{normcon} implies that 
$$
\nabla \tilde{Q}_j \to \nabla \tilde{Q}_0 \textup{ in } L^2(\Omega' ;\mathcal{S})
.$$
Since $\tilde{Q}_j = P=Q_0$ on $\Omega' \setminus \Omega$, we have shown that in fact
$$
\nabla Q_j \to \nabla Q_0 \textup{ in } L^2(\Omega;\mathcal{S}),
$$
and the proof is complete under the assumption that $1=L_1\leq L_i$ for $2 \leq i \leq 4$. If $\min_i L_i$ is achieved by some $L_i$ where $i \neq 1$, the proof follows almost exactly as above, with the second integral in \eqref{mess} replaced by a similar expression involving the three largest $L_i$'s and their corresponding elastic terms.
\end{proof}

{\section{Discussion}
We have introduced a version of Landau-de Gennes model involving a quartic elastic energy that reduces to the four-constant Oseen-Frank energy for uniaxial $Q$-tensors. This version has several advantages, as laid out in the introduction, in addition to being capable of recovering the full Oseen-Frank energy. 

From the mathematical standpoint, it is far more stable than the frequently used cubic model since the energy is bounded from below and coercive. From the physical viewpoint, we have found this form of Landau-de Gennes particularly useful in modeling liquid crystalline states with co-existing isotropic and nematic phases, specifically in the context of lyotropic chromonic liquid crystals (LCLCs) \cite{kim2013morphogenesis,zhouetal}. To this end, in order to capture nematic-to-isotropic phase transitions, a $Q$-tensor has to vanish in a part of the physical domain while being in a uniaxial, $\mathbb{RP}^2$-valued nematic state elsewhere. Clearly, this setup falls outside of the scope of an $\mathbb{S}^2$-valued director theory such as that of Oseen and Frank. What is more, because in the Oseen-Frank limit the standard quadratic Landau-de Gennes model \eqref{elastic} reduces to a director model with two equal elastic constants, the quadratic Landau-de Gennes framework is not adequate when the contributions from splay, twist, and bend all incur different costs. In fact, it is well-known experimentally that for certain liquid crystals, including LCLCs, the elastic constants for splay, bend and twist can vary dramatically. It is thus a keen interest of ours to explore the effects of strong elastic disparity on evolution of interfaces and nematic singularities in isotropic-to-nematic transitions. We are encouraged by the results of gradient flow simulations using our model which exhibit good qualitative agreement with experiments, including the capture of such features as phase boundary singularities--the so-called ``boojums"--and vortex splitting, cf. \cite{GoKiLaNoSt19}. 

\begin{example}[Lyotropic Chromonic Liquid Crystals]\label{example}
Let us compare the standard quadratic Landau-de Gennes energy \eqref{elastic}, a cubic Landau-de Gennes energy \cite{BaMa10}, and the quartic energy proposed here in terms of modeling the LCLC's with highly disparate elastic constants considered in \cite{zhouetal}. The authors in \cite{zhouetal} measured the values of the Oseen-Frank elastic constants for disodium cromoglycate and found that the coefficients $K_1$ and $K_3$ for splay and bend are of the order 10 pN, while the twist coefficient is about 10 times smaller. However, the values of $K_1$ and $K_3$ are not equal. In Fig. 4 of \cite{zhouetal}, the ratio $K_1/K_3$ varies between approximately .4 and 1.2, depending on temperature and concentration. Therefore, modeling this liquid crystal using the standard quadratic model \eqref{elastic} is not feasible, since the values of $K_1$ and $K_3$ in terms of $L_i$ for this model are equal, cf. \cite[Eq. 4.17]{ball17}. If one were to attempt to use the cubic model with modified potential from \cite{BaMa10}, a necessary condition for minimization is that one of the coefficients, $L_1'$, is positive, cf. \cite[Eq. 5.16]{ball17}. In terms of the Oseen-Frank constants $K_i$ \cite[Equation 4.18]{ball17}, this condition is equivalent to
\begin{equation}\label{ineq}
(1-s_0)(K_1-K_3) + \displaystyle 3s_0\,K_2 > 0
\end{equation}
when $K_3>K_1$. If, for instance, one has $10 K_2 \approx K_1 \approx  K_3/2$ from the experiment and $s_0$ is in the typically observed interval $(0.6,0.7)$ \cite{ball2017liquid}, this inequality does not hold. Thus, the standard Landau-de Gennes model and its variants are ill-suited to modeling in this scenario, since they are not well-posed. In contrast, for the quartic model proposed here, the restrictions $K_1>K_2+K_4,$ $K_3 > K_2+K_4$ needed in order to guarantee coercivity do not present any obstacles. Furthermore, these inequalities are consistent with commonly observed values of $K_i$ \cite{garttalk}. 
\end{example}

Finally, we wish to emphasize that there are surely many, many other choices of elastic energy densities beside those we take in  \eqref{ourfldg} capable of recovering the four-constant Oseen-Frank energy for uniaxial $Q$. This should not cause concern since what is physically measurable are the Frank constants $K_1,\,K_2,\,K_3$ and $K_4$. Given that there are six different allowable cubic terms, 13 quartic terms, and even more terms if one ventures into higher order, it is clear that there are myriad combinations of terms in a $Q$-tensor based  
elastic energy density that can reduce to the four-constant Oseen-Frank energy when $Q$ is uniaxial. Of course, what may differ from one choice to another are the constraints on the coefficients under which the particular version is well-posed. On this point, it should be noted that one should expect {\it any} such choice of elastic energy density to impose restrictions more stringent than those dictated by the Ericksen inequalities on the $K_i$'s, cf. \eqref{ericksen} since one is seeking well-posedness over a much broader class of competitors than just uniaxials.} 

{\section{Appendix}
Here we elucidate the relationship between the energy \eqref{sigma} and the independent elastic invariants in \cite{LoMoTr87}. First, we observe that
\begin{equation}
\label{eq:divL}
\left|\left(\frac{s_0}{3}\id \pm Q\right)\dive Q\right|^2=\frac{s_0^2}{9}{\left|\dive Q\right|}^2\pm\frac{2s_0}{3}\dive Q\cdot Q\dive Q+{\left|Q\dive Q\right|}^2
\end{equation}
and, utilizing the anticommutator $\{A,B\}:=AB + BA$,
\begin{multline}
\label{eq:curlL}
\left|\left(\frac{s_0}{3}\id \pm Q\right)\curl Q\right|^2=\frac{1}{2}\sum_{j=1}^3\left|\left\{\frac{s_0}{6}\id \mp Q,\nabla Q_j-\nabla Q_j^T\right\}\right|^2\\=\frac{s_0^2}{9}\sum_{j=1}^3\left({\left|\nabla Q_j\right|}^2-\nabla Q_j\cdot\nabla Q_j^T\right)\qquad\qquad\\ \mp\frac{s_0}{3}\sum_{j=1}^3\left(\nabla Q_j-\nabla Q_j^T\right)\cdot\left(Q\left(\nabla Q_j-\nabla Q_j^T\right)+\left(\nabla Q_j-\nabla Q_j^T\right)Q\right)\\+{\left|\left(Q\left(\nabla Q_j-\nabla Q_j^T\right)+\left(\nabla Q_j-\nabla Q_j^T\right)Q\right)\right|}^2.
\end{multline}
The terms in \eqref{eq:divL} can be written as
\begin{align*}
    {\left|\dive Q\right|}^2&=Q_{ij,j}Q_{ik,k},\\
    \dive Q\cdot Q\dive Q&=Q_{im}Q_{ij,j}Q_{ml,l},\\
    {\left|Q\dive Q\right|}^2&=Q_{ik}Q_{il}Q_{kj,j}Q_{lm,m}.
\end{align*}
These correspond to the invariants $[L_2^{(2)}]$, $[L_3^{(3)}]$, and $[L^{(4)}_6]$ in \cite{LoMoTr87}, respectively. 
Expanding the terms quadratic in $Q$ in \eqref{eq:curlL} results in expressions
\begin{align*}
    {\left|\nabla Q_j\right|}^2&=Q_{ij,k}Q_{ij,k},\\
    \nabla Q_j\cdot \nabla Q_j^T&=Q_{ij,k}Q_{kj,i},
\end{align*}
corresponding to $[L_1^{(2)}]$ and $[L_3^{(2)}]$ in \cite{LoMoTr87}. The cubic terms in \eqref{eq:curlL} are
\begin{align*}
    \nabla Q_j\cdot Q\nabla Q_j&=Q_{il}Q_{ij,m}Q_{lj,m},\\
    \nabla Q_j\cdot Q\nabla Q_j^T&=Q_{il}Q_{ij,m}Q_{mj,l},\\
    \nabla Q_j\cdot \nabla Q_j\,Q&=Q_{il}Q_{mj,i}Q_{mj,l},
\end{align*}
and these are respectively the invariants $[L_4^{(3)}]$, $[L_6^{(3)}]$, and $[L^{(3)}_7]$ in \cite{LoMoTr87}. Note that $[L^{(3)}_7]$ is not an independent invariant (cf. \cite{LoMoTr87}) because it can be written as
\[[L^{(3)}_7]=2[L^{(3)}_6]+[L^{(3)}_5]-2[L^{(3)}_4]+[L^{(3)}_3]+2[L^{(3)}_2]-2[L^{(3)}_1].\]

Finally, the fourth order terms in \eqref{eq:curlL} are as follows
\begin{align*}
    Q\nabla Q_j\cdot Q\nabla Q_j&=Q_{il}Q_{im}Q_{lj,p}Q_{mj,p},\\
    Q\nabla Q_j\cdot Q\nabla Q_j^T&=Q_{il}Q_{lm}Q_{ik,p}Q_{pk,m},\\
    Q\nabla Q_j^T\cdot Q\nabla Q_j^T&=Q_{il}Q_{lm}Q_{kj,i}Q_{kj,m},\\
    Q\nabla Q_j\cdot \nabla Q_j\,Q&=Q_{ik}Q_{lm}Q_{lj,i}Q_{mj,k},\\
    Q\nabla Q_j\cdot \nabla Q_j^T\,Q&=Q_{ik}Q_{ml}Q_{lj,i}Q_{kj,m}.
\end{align*}
The first term in this list is the invariant $[L^{(4)}_7]$, the second term is $[L^{(4)}_9]$, and the fourth term is $[L^{(4)}_{11}]$. The third and the fifth terms do not belong to the list of independent invariants in \cite{LoMoTr87} and thus must be linear combinations of the invariants $[L^{(4)}_1]$ through $[L^{(4)}_{13}]$ in \cite{LoMoTr87}, cf. the following lemma. We will not pursue the issue of finding the coefficients of these combinations further.
\begin{lemma}\label{term35inv}
The terms 
\begin{equation}\label{term3}
Q\nabla Q_j^T \cdot Q \nabla Q_j^T = Q_{il}Q_{lm}Q_{kj,i}Q_{kj,m}
\end{equation}
and
\begin{equation}\label{term5}
    Q\nabla Q_j\cdot \nabla Q_j^T\,Q=Q_{ik}Q_{ml}Q_{lj,i}Q_{kj,m}
\end{equation}
are frame indifferent and materially symmetric, and so can each be written as a linear combination of the thirteen quartic $[L^{(4)}_j]$ invariants in \cite{LoMoTr87}.
\end{lemma}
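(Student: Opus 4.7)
The plan is to verify that both \eqref{term3} and \eqref{term5} satisfy the requisite invariance properties by direct index calculation, after which the desired representation as a linear combination of the thirteen $[L^{(4)}_j]$'s is an immediate consequence of the basis result established in \cite{LoMoTr87}.

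For frame indifference I would exploit the fact that every tensor index in the two expressions is contracted against another. Under a rotation $R \in SO(3)$, the $Q$-tensor transforms as $Q_{ab} \mapsto R_{ac}R_{bd}Q_{cd}$ and its gradient as $Q_{ab,c} \mapsto R_{ap}R_{bq}R_{cr}Q_{pq,r}$. Substituting these transformation rules into \eqref{term3} and \eqref{term5} and repeatedly applying the orthogonality relation $R_{ia}R_{ib}=\delta_{ab}$ to pairs of rotation matrices sharing a summed index collapses every extra factor of $R$ into a Kronecker delta, reproducing the original scalar. This is routine though index-heavy bookkeeping; organizing the substitution factor-by-factor across the four tensorial pieces of each term would keep it manageable.

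For material symmetry, which for an isotropic centrosymmetric nematic reduces to parity invariance beyond rotations, I would simply note that $Q$ (a true rank-two tensor) is unaffected by $x\mapsto -x$, while each spatial derivative flips sign. Since each of \eqref{term3} and \eqref{term5} contains exactly two spatial derivatives in total, the net parity contribution is $(-1)^2=1$, and invariance holds.

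With both invariance properties in hand, the conclusion follows from the classification in \cite{LoMoTr87}: the thirteen quartic invariants $[L^{(4)}_1],\ldots,[L^{(4)}_{13}]$ span the vector space of scalar expressions that are quartic in $Q$, quadratic in $\nabla Q$, and invariant under both frame indifference and the material symmetry group. Hence each of \eqref{term3} and \eqref{term5} admits such a representation. I do not anticipate any real obstacle in this proof, since the two invariance checks are mechanical consequences of the transformation laws and the spanning result is cited from the physics literature rather than reproved here; the only mild nuisance is tracking indices in the frame-indifference step.
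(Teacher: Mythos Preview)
Your proposal is correct and follows essentially the same strategy as the paper: verify invariance by a direct index computation exploiting the orthogonality relation $R_{ia}R_{ib}=\delta_{ab}$, then appeal to the spanning result of \cite{LoMoTr87}. The paper's only organizational difference is that it folds frame indifference and material symmetry into a single check over all of $O(3)$ (observing that any orthogonal $R$ factors as a reflection times a rotation and then running the index calculation once), whereas you treat rotations and parity separately; both routes are valid, and in fact your rotation computation already works verbatim for every $R\in O(3)$ since it uses only orthogonality, so your separate parity step is redundant but harmless. One small slip in your final paragraph: the $[L^{(4)}_j]$ span the invariants that are \emph{quadratic} in $Q$ and quadratic in $\nabla Q$ (hence quartic overall), not ``quartic in $Q$, quadratic in $\nabla Q$.''
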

\begin{proof}
We begin with the first term \eqref{term3} and follow the discussion in \cite{balltalk}. Both conditions involve invariance of \eqref{term3} under different coordinate changes $x \to z$, with corresponding measurements of the tensor $Q$, $Q^*$ in the $x=(x_1,x_2,x_3)$ (standard Cartesian) and $z=(z_1,z_2,z_3)$ (Cartesian after a change of variables) coordinates, respectively. More precisely, the frame indifference condition entails that for $\overline{x} \in \Omega$, $\tilde{R} \in SO(3)$ and $z=\overline{x}+\tilde{R}(x-\overline{x}),$
\begin{equation}\label{indiff}
    Q \nabla_x^T Q \cdot Q \nabla_x^T Q = Q^* \nabla_z^T Q^* \cdot  Q^* \nabla_z^T Q^*.
\end{equation}
Material symmetry is similar, except that $z=\overline{x}+\hat{R}(x - \overline{x})$, where $\hat{R}$ is the reflection $\hat{R} = -\id + 2 e \otimes e ,$ $|e|=1$. Since any $R \in O(3)$ can be expressed as $\hat{R}\tilde{R}$ for a reflection $\hat{R}$ and $\tilde{R} \in SO(3)$, verifying these two conditions can be combined and simplified into a single calculation, cf. \cite{balltalk}. Therefore, we must check that \eqref{indiff} holds for $R \in O(3)$. 
We will start on the right hand side of \eqref{indiff} and simplify, using the relation $R_{\gamma\alpha}R_{\gamma\beta} = \delta_{\alpha\beta}$ and the identities
\begin{align*}
    Q^*_{ab} &= R_{af}Q_{fg}R_{bg},\quad Q^*_{de,a} = R_{do}R_{ep}R_{aq}Q_{op,q},\\
    Q^*_{bc} &= R_{bh}Q_{hn}R_{cn},\quad
    Q^*_{de,c} = R_{dr}R_{es}R_{ct}Q_{rs,t},
\end{align*}
which can be derived as in \cite{balltalk}. Here $\delta_{\alpha\beta}$ is the usual Kronecker delta, and the first relation is a restatement of the identity $R^TR= \id$ for an orthogonal matrix. We write
\begin{align*}
    Q^* \nabla_z^T Q^* &\cdot  Q^* \nabla_z^T Q^*  \\
    &= Q^*_{ab}Q^*_{bc}Q^*_{de,a}Q^*_{de,c} \\
    &= R_{af}Q_{fg}R_{bg}R_{do}R_{ep}R_{aq}Q_{op,q}R_{bh}Q_{hn}R_{cn}R_{dr}R_{es}R_{ct}Q_{rs,t}\\
    &= R_{af}R_{aq}R_{bg}R_{bh}R_{do}R_{dr}R_{ep}R_{es}R_{ct}R_{cn}Q_{fg}Q_{hn}Q_{op,q}Q_{rs,t} \\
    &= \delta_{fq}\delta_{gh}\delta_{or}\delta_{ps}\delta_{tn}Q_{fg}Q_{hn}Q_{op,q}Q_{rs,t} \\
    &= Q_{fg}Q_{gt}Q_{op,f}Q_{op,t}.
\end{align*}
Of course the indices $f,g,t,o,p$ can be replaced by $i,j,k,l,m$, and we have shown that $Q^* \nabla_z^T Q^* \cdot Q^* \nabla_z^T Q^* = Q \nabla_x^T Q \cdot Q \nabla_x^T Q$.\par
Moving on to the second term, we first record the identities
\begin{align*}
    Q^*_{ab} &= R_{af}Q_{fg}R_{bg},\quad Q^*_{de,a} = R_{do}R_{ep}R_{aq}Q_{op,q},\\
    Q^*_{cd} &= R_{ch}Q_{hn}R_{dn},\quad
    Q^*_{be,c} = R_{br}R_{es}R_{ct}Q_{rs,t}.
\end{align*}
We may now calculate
\begin{align*}
    Q^*\nabla_z Q_j^* &\cdot \nabla_z^T Q_j^*\,Q^*  \\
    &= Q^*_{ab}Q^*_{cd}Q^*_{de,a}Q^*_{be,c}\\
    &= R_{af}Q_{fg}R_{bg}R_{ch}Q_{hn}R_{dn}R_{do}R_{ep}R_{aq}Q_{op,q}R_{br}R_{es}R_{ct}Q_{rs,t}\\
    &= R_{af}R_{aq}R_{bg}R_{br}R_{ch}R_{ct}R_{dn}R_{do}R_{ep}R_{es}Q_{fg}Q_{hn}Q_{op,q}Q_{rs,t} \\
    &= \delta_{fq}\delta_{gr}\delta_{ht}\delta_{no}\delta_{ps}Q_{fg}Q_{hn}Q_{op,q}Q_{rs,t}\\
    &=Q_{fg}Q_{hn}Q_{np,f}Q_{gp,h} \\
    &=  Q\nabla Q_j\cdot \nabla Q_j^T\,Q.
\end{align*}
\end{proof}}
\bibliographystyle{acm}
\bibliography{bibfiledissertation}

\begin{thebibliography}{10}

\bibitem{balltalk}
{\sc Ball, J.~M.}
\newblock The {Q}-tensor theory of liquid crystals,
  https://people.maths.ox.ac.uk/ball/teaching/benin{\%}20summer{\%}20school-
  {\%}202010.pdf, 2010.

\bibitem{ball2017liquid}
{\sc Ball, J.~M.}
\newblock Liquid crystals and their defects.
\newblock In {\em Mathematical thermodynamics of complex fluids}. Springer,
  2017, pp.~1--46.

\bibitem{ball17}
{\sc Ball, J.~M.}
\newblock Mathematics and liquid crystals.
\newblock {\em Molecular Crystals and Liquid Crystals 647}, 1 (2017), 1--27.

\bibitem{BaMa10}
{\sc Ball, J.~M., and Majumdar, A.}
\newblock Nematic liquid crystals: From {M}aier-{S}aupe to a continuum theory.
\newblock {\em Molecular Crystals and Liquid Crystals 525}, 1 (2010), 1--11.

\bibitem{BaZa11}
{\sc Ball, J.~M., and Zarnescu, A.}
\newblock Orientability and energy minimization in liquid crystal models.
\newblock {\em Arch. Ration. Mech. Anal. 202}, 2 (2011), 493--535.

\bibitem{bp}
{\sc Bauman, P., and Phillips, D.}
\newblock Regularity and the behavior of eigenvalues for minimizers of a
  constrained {$Q$}-tensor energy for liquid crystals.
\newblock {\em Calc. Var. Partial Differential Equations 55}, 4 (2016), Art.
  81, 22.

\bibitem{BeMe84}
{\sc Berreman, D.~W., and Meiboom, S.}
\newblock Tensor representation of {O}seen-{F}rank strain energy in uniaxial
  cholesterics.
\newblock {\em Phys. Rev. A 30\/} (1984), 1955--1959.

\bibitem{DaGa98}
{\sc Davis, T.~A., and Gartland, Jr., E.~C.}
\newblock Finite element analysis of the {L}andau-de {G}ennes minimization
  problem for liquid crystals.
\newblock {\em SIAM J. Numer. Anal. 35}, 1 (1998), 336--362.

\bibitem{de71}
{\sc de~Gennes, P.~G.}
\newblock Short range order effects in the isotropic phase of nematics and
  cholesterics.
\newblock {\em Molecular Crystals and Liquid Crystals 12}, 3 (1971), 193--214.

\bibitem{dePr95}
{\sc de~Gennes, P.~G., and Prost, J.}
\newblock {\em The Physics of Liquid Crystals}.
\newblock International Series of Monogr. Clarendon Press, 1995.

\bibitem{Di95}
{\sc Dickmann, S.}
\newblock {\em Numerische Berechnung von Feld und Molekülausrichtung in
  Flüssigkristallanzeigen}.
\newblock PhD thesis, University of Karlsruhe, 1995.

\bibitem{Er66}
{\sc Ericksen, J.~L.}
\newblock Inequalities in liquid crystal theory.
\newblock {\em The Physics of Fluids 9}, 6 (1966), 1205--1207.

\bibitem{fatkullinslastikov}
{\sc Fatkullin, I., and Slastikov, V.}
\newblock On spatial variations of nematic ordering.
\newblock {\em Physica D: Nonlinear Phenomena 237}, 20 (2008), 2577 -- 2586.

\bibitem{Fr58}
{\sc Frank, F.~C.}
\newblock I. {L}iquid crystals. {O}n the theory of liquid crystals.
\newblock {\em Discuss. Faraday Soc. 25\/} (1958), 19--28.

\bibitem{garttalk}
{\sc Gartland, Jr., E.~C.}
\newblock An overview of the {O}seen-{F}rank elastic model,
  https://www.newton.ac.uk/files/seminar/20130108090009401-153462.pdf, 2013.

\bibitem{gartland}
{\sc Gartland, Jr., E.~C.}
\newblock Scalings and limits of {L}andau--de {G}ennes models for liquid
  crystals: a comment on some recent analytical papers.
\newblock {\em Math. Model. Anal. 23}, 3 (2018), 414--432.

\bibitem{GoKiLaNoSt19}
{\sc {Golovaty}, D., {Kim}, Y.-K., {Lavrentovich}, O.~D., {Novack}, M., and
  {Sternberg}, P.}
\newblock {Phase Transitions in Nematics: Textures with Tactoids and
  Disclinations}.
\newblock {\em arXiv e-prints\/} (Feb 2019), arXiv:1902.06342.

\bibitem{GoNoStVe18}
{\sc {Golovaty}, D., {Novack}, M., {Sternberg}, P., and {Venkatraman}, R.}
\newblock {A Model Problem for Nematic-Isotropic Transitions with Highly
  Disparate Elastic Constants}.
\newblock {\em arXiv e-prints\/} (Nov 2018), arXiv:1811.12586.

\bibitem{GSV}
{\sc {Golovaty}, D., {Sternberg}, P., and {Venkatraman}, R.}
\newblock {A Ginzburg-Landau type problem for highly anisotropic nematic liquid
  crystals}.
\newblock {\em To appear in SIAM J. Math. Anal.\/} (2018).

\bibitem{HaKiLi86}
{\sc Hardt, R., Kinderlehrer, D., and Lin, F.-H.}
\newblock Existence and partial regularity of static liquid crystal
  configurations.
\newblock {\em Comm. Math. Phys. 105}, 4 (1986), 547--570.

\bibitem{IyXuZa15}
{\sc Iyer, G., Xu, X., and Zarnescu, A.~D.}
\newblock Dynamic cubic instability in a 2{D} {$Q$}-tensor model for liquid
  crystals.
\newblock {\em Math. Models Methods Appl. Sci. 25}, 8 (2015), 1477--1517.

\bibitem{kkls}
{\sc Katriel, J., Kventsel, G.~F., Luckhurst, G.~R., and Sluckin, T.~J.}
\newblock Free energies in the landau and molecular field approaches.
\newblock {\em Liquid Crystals 1}, 4 (1986), 337--355.

\bibitem{kim2013morphogenesis}
{\sc Kim, Y.-K., Shiyanovskii, S.~V., and Lavrentovich, O.~D.}
\newblock Morphogenesis of defects and tactoids during isotropic--nematic phase
  transition in self-assembled lyotropic chromonic liquid crystals.
\newblock {\em Journal of Physics: Condensed Matter 25}, 40 (2013), 404202.

\bibitem{LoMoTr87}
{\sc Longa, L., Monselesan, D., and Trebin, H.-R.}
\newblock An extension of the {L}andau-{G}inzburg-de {G}ennes theory for liquid
  crystals.
\newblock {\em Liquid Crystals 2}, 6 (1987), 769--796.

\bibitem{LoTr}
{\sc Longa, L., and Trebin, H.}
\newblock Structure of the elastic free energy for chiral nematic liquid
  crystals.
\newblock {\em Phys. Rev. A 39}, 4 (1989), 2160--2168.

\bibitem{MaZa10}
{\sc Majumdar, A., and Zarnescu, A.}
\newblock Landau-{D}e {G}ennes theory of nematic liquid crystals: the
  {O}seen-{F}rank limit and beyond.
\newblock {\em Arch. Ration. Mech. Anal. 196}, 1 (2010), 227--280.

\bibitem{newt}
{\sc {Mottram}, N.~J., and {Newton}, C.~J.~P.}
\newblock {Introduction to Q-tensor theory}.
\newblock {\em ArXiv e-prints\/} (Sept. 2014).

\bibitem{Na97}
{\sc Nakagawa, M.}
\newblock On the relation between tensor and vector approaches of
  nematodynamics.
\newblock {\em Liquid Crystals 23}, 4 (1997), 561--567.

\bibitem{Os33}
{\sc Oseen, C.~W.}
\newblock The theory of liquid crystals.
\newblock {\em Trans. Faraday Soc. 29\/} (1933), 883--899.

\bibitem{virg}
{\sc Virga, E.~G.}
\newblock {\em Variational theories for liquid crystals}, vol.~8 of {\em
  Applied Mathematics and Mathematical Computation}.
\newblock Chapman \& Hall, London, 1994.

\bibitem{zhouetal}
{\sc Zhou, S., Neupane, K., Nastishin, Y.~A., Baldwin, A.~R., Shiyanovskii,
  S.~V., Lavrentovich, O.~D., and Sprunt, S.}
\newblock Elasticity{,} viscosity{,} and orientational fluctuations of a
  lyotropic chromonic nematic liquid crystal disodium cromoglycate.
\newblock {\em Soft Matter 10\/} (2014), 6571--6581.

\bibitem{Zo33}
{\sc Zocher, H.}
\newblock The effect of a magnetic field on the nematic state.
\newblock {\em Trans. Faraday Soc. 29\/} (1933), 945--957.

\end{thebibliography}

\end{document}